\newtheorem{theo}{Theorem}[section]
\newtheorem{cor}{Corollary}[section]
\newtheorem{lem}{Lemma}[section]
\newtheorem{prop}{Proposition}[section]
\renewcommand{\Pr}{\mathbb{P}}
\newcommand{\E}{\mathbb{E}}
\newcommand{\Var}{\operatorname{Var}}
\newcommand{\sign}{\operatorname{sign}}
\newcommand{\R}{\mathbb{R}}
\newcommand{\essinf}{\operatorname{ess\inf}}
\newcommand{\esssup}{\operatorname{ess\sup}}
\newcommand{\CH}{\mathcal{H}}
\newcommand{\ud}{\mathrm{d}}
\newcommand{\IP}{\operatorname{IP}(\mu;\delta,\beta,\gamma)}
\newcommand{\IPq}{\operatorname{IP}(\mu;q)}
\newcommand{\IPtilde}{\operatorname{IP}(\widetilde{\mu};\widetilde{q})}
\newcommand{\IPqk}{\operatorname{IP}(\mu_k;q_k)}
\newcommand{\lead}{\operatorname{lead}}
\begin{document}
\begin{frontmatter}

\title{Strengthened Chernoff-type variance bounds}
\runtitle{Strengthened Chernoff-type variance bounds}

\begin{aug}
\author{\fnms{G.} \snm{Afendras}\thanksref{e1}\ead[label=e1,mark]{g\_afendras@math.uoa.gr}} \and
\author{\fnms{N.} \snm{Papadatos}\corref{}\thanksref{e2}\ead[label=e2,mark]{npapadat@math.uoa.gr}\ead[label=u2,url]{users.uoa.gr/\textasciitilde npapadat/}}
\runauthor{G. Afendras and N. Papadatos} 
\address{Department of Mathematics, Section of Statistics and O.R.,
University of Athens, Panepistemiopolis, 157 84 Athens, Greece.
\printead{e1,e2}; \printead{u2}}
\end{aug}

\received{\smonth{5} \syear{2012}}
\revised{\smonth{10} \syear{2012}}

%
\begin{abstract}
Let $X$ be an absolutely continuous random
variable from the integrated Pearson family and assume that $X$ has
finite moments of any order.
Using some properties of the associated orthonormal polynomial system,
we provide a class of strengthened Chernoff-type variance bounds.
\end{abstract}

%
\begin{keyword}
\kwd{Chernoff-type variance bounds}
\kwd{derivatives}
\kwd{integrated Pearson family}
\kwd{Rodrigues polynomials}
\end{keyword}

\end{frontmatter}

\section{Introduction}\label{sec.1}

Let $Z$ be a standard normal random variable and $g\dvtx \R\to\R$
any absolutely continuous function with derivative $g'$ such that
$\E(g'(X))^2<\infty$.
Chernoff \cite{Cher},
using Hermite polynomials, proved that
%
\begin{equation}\label{Chernoff}
\Var g(Z)\leq\E\bigl(g'(Z)\bigr)^2;
\end{equation}
see, also, Nash~\cite{Nash} and
Brascamp and Lieb~\cite{BL}.
In (\ref{Chernoff}), the
equality holds if and only if $g$
is a polynomial of degree at most one --
a linear function. This inequality plays an important role in
the isoperimetric problem, as well as to several areas
in probability
and statistics. It has been extended and
generalized by many authors, including
\cite{Chen,Cac,BorUtev,Klaassen,CP1,Pap1,John,HK,PP,OlkShepp,P-Rao,WZ,AP}.
On the other hand, Cacoullos \cite{Cac}
showed
the inequality
%
\begin{equation}
\label{Cacoullos} \Var g(Z)\geq\E^2 g'(Z), %
\end{equation}
in which the equality again holds if and only if $g$ is linear.

In this article, we provide improvements
on Chernoff's bound. In particular,
an application of
the main result (Theorem~\ref{theo.main}, $n=1$)
to $Z$ yields the inequality
%
\begin{equation}\label{StrongNormal}
\Var g(Z)\leq\tfrac12 \E^2 g'(Z) +
\tfrac12 \E\bigl(g'(Z)\bigr)^2, %
\end{equation}
in which the equality holds if and only if $g$ is a polynomial of degree
at most two. In view of (\ref{Cacoullos}), it is clear that the
upper bound in (\ref{StrongNormal}) improves the one given in
(\ref{Chernoff}) and, in fact, it is strictly better,
unless $g$ is linear. The difference in
right-hand sides (\ref{Chernoff}) minus (\ref{StrongNormal}) is equal
to $\frac12\Var g'(Z)$, indicating the magnitude of this improvement.

Similar bounds are valid for all distributions that will
be studied in the sequel, namely, the Beta, Gamma and Normal.
The main result applies to any Pearson (more precisely,
integrated Pearson) random variable possessing
moments of any order. Hence, Theorem~\ref{theo.main}
also improves the
bounds for Beta random variables, given by
\cite{P-Rao,WZ}.
The integrated Pearson distributions
are defined as follows,
see
\cite{John,APP2,AP,AP2}:
%
\begin{defi}[(Integrated Pearson family)]\label{def.IP}
Let $X$ be an absolutely continuous random variable with density $f$
and finite mean $\mu=\E X$. We say that
$X$ (or its density $f$) belongs to the integrated Pearson
family
if there exists a quadratic polynomial
$q(x)=\delta x^2+\beta x+\gamma$
with $\delta,\beta,\gamma\in\R$,
$|\delta|+|\beta|+|\gamma|>0$,
such that
\begin{equation}\label{qua1}
\int_{-\infty}^{x}(\mu-t)f(t)
\,\ud{t}=q(x)f(x) \qquad \mbox{for all } x\in\R.
\end{equation}
This fact will be denoted by
\begin{equation}\label{IP}
X\sim\IPq\mbox{ or } f\sim\IPq \quad \mbox{or, more explicitly},\quad  X\mbox{ or } f\sim\IP.
\end{equation}
\end{defi}

In the sequel, whenever we claim that $X$ or $f\sim\IP$, it will
be understood
that the density $f$ has been chosen in $C^{\infty}(\alpha,\omega)$ and
is vanishing outside
$(\alpha,\omega)$, where $(\alpha,\omega):=(\essinf(X),\esssup(X))$
is the interval support of $X$; see \cite{AP2}, Proposition~2.1.
Consider an arbitrary real polynomial $q$ with $\deg(q)\leq2$
such that the set $S^{+}(q):=\{x\dvtx q(x)>0\}$ is nonempty.
It can be shown that for any $\mu\in S^{+}(q)$ (i.e.,
with $q(\mu)>0$), there exists a unique (up to equality
in distribution) random variable $X$ with mean $\mu$
such that its density $f$ satisfies (\ref{qua1});
see \cite{AP2}, Section~2.

Many commonly used
continuous distributions are members of the
integrated Pearson family,
for example, Normal, Beta, Gamma and Negative Gamma. This list also includes
Pareto (with density $f(x)=a(x+1)^{-a-1}$, $x>0$, and parameter $a>1$),
Reciprocal Gamma (with density $f(x)=\lambda^a x^{-a-1}\mathrm{e}^{-\lambda
/x}/\Gamma(a)$, $x>0$, and parameters $a>1$ and $\lambda>0$),
$F_{n,m}$ (with $m>2$) and $t_n$ (with $n>1$)
distributions, their location-scale families
and their negatives -- see Table~2.1 in \cite{AP2} for a complete description.
The proof of the main result is based on specific
properties of the associated orthogonal polynomials
that can be found in \cite{AP2}.
For easy reference,
all required results are reviewed in \hyperref[app]{Appendix}.

\section{Preliminaries}\label{sec2}
The following definition will be used in the sequel.
%
%
\begin{defi}[(Cf. \cite{AP}, page 3629)]\label{def.class}
Assume that
$X\sim\IPq$
and denote by $q(x)=\delta x^2+\beta x+\gamma$ its
quadratic polynomial.
Let $(\alpha,\omega)$ be the support of $X$ and
fix an integer $n\in\{1,2,\ldots\}$.
We shall denote by
$\CH^n(X)$ the
class of functions
$g\dvtx (\alpha,\omega)\to\R$ satisfying the
following two properties:
\begin{enumerate}[H$_2$:]
%
\item[H$_1$:]
For each $k\in\{0,1,\ldots,n-1\}$,
$g^{(k)}$ (with $g^{(0)}=g$)
is an absolutely continuous function with
a.s. derivative $g^{(k+1)}$. That is,
$g\in C^{n-1}(\alpha,\omega)$ and the function
$g^{(n-1)}\dvtx (\alpha,\omega)\to\R$, with
\[
g^{(n-1)}(x):=\frac{\ud^{n-1}g(x)}{\ud{x}^{n-1}}, \qquad \alpha<x<\omega,
\]
is absolutely continuous in $(\alpha,\omega)$
with a.s. derivative $g^{(n)}$ such that
\[
g^{(n-1)}(y)-g^{(n-1)}(x) =\int_{x}^{y}
g^{(n)}(t)\,\ud{t} \qquad \mbox{for every compact interval } [x,y]\subseteq(
\alpha,\omega).
\]
\item[H$_2$:]
$\E q^n(X) (g^{(n)}(X))^2<\infty$.
\end{enumerate}
%
%
%
\end{defi}
%

Also, we denote by $\CH^{0}(X)$ and $\CH^{\infty}(X)$ the following
classes of functions:
\begin{eqnarray*}
\CH^{0}(X)&:=&L^2(\R,X)\equiv \bigl\{g\dvtx (\alpha,\omega)\to
\R, \mbox {Borel measurable, such that}\, \Var g(X)<\infty \bigr\};
\\
\CH^{\infty}(X)&:=&\bigcap_{n=0}^{\infty}\CH^{n}(X)
= \bigl\{g\in C^{\infty}(\alpha,\omega)\dvtx  \E q^n(X)
\bigl(g^{(n)}(X)\bigr)^2<\infty \mbox{ for all } n=0,1,\ldots
\bigr\}.
\end{eqnarray*}
It is clear that
$\E^2q^{n}(X)|g^{(n)}(X)|\leq\E q^{n}(X)
\E q^{n}(X)(g^{(n)}(X))^2 <\infty$,
provided $\E|X|^{2n}<\infty$
(equivalently,
$\delta<1/(2n-1)$; see Lemma~\ref{lem.moments}).
On the other hand, under suitable
moment conditions on $X$, the assumption H$_2$
implies that
$\E q^i(X)(g^{(i)}(X))^2<\infty$ for
all $i\in\{0,1,\ldots,n\}$.
In particular, if all moments exist (equivalently,
if $\delta\leq0$),
then
\[
L^2(\R,X)=\CH^0(X)\supseteq\CH^{1}(X)
\supseteq \CH^{2}(X)\supseteq\cdots\supseteq\CH^{\infty}(X),
\]
that is,
$\CH^{n}(X)=\bigcap_{i=0}^n \CH^i(X)$ for all $n$.
In order to verify this fact
we first show a lemma.

\begin{lem}\label{lem.deriv.l2}
If $X\sim\IPq$ with support $(\alpha,\omega)$
and $g\dvtx (\alpha,\omega)\to\R$ is
an absolutely continuous function with a.s. derivative $g'$ such that
$\E q(X)(g'(X))^2<\infty$ then
\mbox{$\E g^2(X)<\infty$}.
\end{lem}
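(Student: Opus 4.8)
The plan is to prove finiteness by a direct real-analysis argument based on the defining relation~(\ref{qua1}), a single integration by parts, and a self-improving (quadratic) inequality; no orthogonal-polynomial machinery is needed. First I would record the elementary consequences of~(\ref{qua1}). Writing $F(x):=\int_{-\infty}^{x}(\mu-t)f(t)\ud t=q(x)f(x)$, the finiteness of $\mu$ gives $F(\alpha^+)=F(\omega^-)=0$, and since $F'(x)=(\mu-x)f(x)$ the function $F$ is nonnegative on $\R$; as $f>0$ on $(\alpha,\omega)$ this forces $q>0$ there, and moreover $\alpha<\mu<\omega$. Because $g$ is absolutely continuous it is bounded on every compact subinterval of $(\alpha,\omega)$, so $\int_K g^2 f\,\ud x<\infty$ for every compact $K\subset(\alpha,\omega)$; hence it suffices to bound the two tail integrals, near $\omega$ and near $\alpha$, separately. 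I treat the right tail, the left being symmetric.

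Fix $x_0\in(\mu,\omega)$ and, for $x_0<b<\omega$, set $J_b:=\int_{x_0}^{b}g^2 f\,\ud x$, which is finite by the compactness remark. Integrating by parts against $\bar F(x):=\int_x^{\omega}f(t)\,\ud t$ (so $\bar F{}'=-f$ and $\bar F(\omega^-)=0$) and using $(g^2)'=2gg'$ gives
\[
J_b=\int_{x_0}^{b}g^2\,(-\bar F)'\,\ud x=g^2(x_0)\bar F(x_0)-g^2(b)\bar F(b)+2\int_{x_0}^{b}g\,g'\,\bar F\,\ud x.
\]
Dropping the nonpositive boundary term at $b$ and applying Cauchy--Schwarz in the form $\int|g||g'|\bar F\,\ud x\le(\int g^2\bar F^2/(qf)\,\ud x)^{1/2}(\int (g')^2qf\,\ud x)^{1/2}$ bounds the last integral by a multiple of $\sqrt{A}$, where $A:=\E q(X)(g'(X))^2<\infty$.

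The crux is the pointwise estimate $\bar F(x)\le F(x)/(x-\mu)=q(x)f(x)/(x-\mu)$ for $x>\mu$, which follows from $F(x)=\int_x^{\omega}(t-\mu)f(t)\,\ud t\ge(x-\mu)\bar F(x)$. Squaring yields $\bar F^2/(qf)\le\theta f$ on $(x_0,\omega)$ with $\theta:=\sup_{x_0<x<\omega}q(x)/(x-\mu)^2<\infty$; the supremum is finite because $q(x)/(x-\mu)^2$ is continuous on $[x_0,\omega)$ and, when $\omega=\infty$, tends to $\delta$ as $x\to\infty$. Combining the two displays gives $J_b\le C_1+2\sqrt{\theta A}\,\sqrt{J_b}$ with $C_1:=g^2(x_0)\bar F(x_0)$, a bound uniform in $b$; solving this quadratic inequality bounds $J_b$ by a finite constant independent of $b$, and letting $b\uparrow\omega$ (monotone convergence) yields $\int_{x_0}^{\omega}g^2 f\,\ud x<\infty$. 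The analogous argument near $\alpha$ (using $F_X(x):=\int_{\alpha}^{x}f\,\ud t$, with $x<\mu$ and the bound $F_X(x)\le F(x)/(\mu-x)$) together with the finite compact middle piece then gives $\E g^2(X)<\infty$.

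The step I expect to be the main obstacle is the self-bounding inequality itself: one cannot apply it directly to $\int_{x_0}^{\omega}$, since that integral is not yet known to be finite, so dividing by $\sqrt{J}$ would be illegitimate. Truncating to $[x_0,b]$ and passing to the monotone limit is precisely what makes the quadratic-inequality trick valid. This same mechanism forces the choice $x_0>\mu$, so that the weight $1/(x-\mu)$ — and hence $\theta$ — remains bounded; the blow-up at $x=\mu$ is exactly what dictates splitting the estimate into the two tails.
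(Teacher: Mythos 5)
Your proof is correct, but it follows a genuinely different route from the paper's. The paper's argument is a two-line computation: it centers $g$ at the mean, writes $g^2(X)\le 2g^2(\mu)+2(g(X)-g(\mu))^2$ and $g(X)-g(\mu)=\int_\mu^X g'(t)\ud{t}$, applies Cauchy--Schwarz to the \emph{inner} integral, $\bigl(\int_\mu^x g'(t)\ud{t}\bigr)^2\le |x-\mu|\,\bigl|\int_\mu^x (g'(t))^2\ud{t}\bigr|$, and then Fubini together with the defining relation (\ref{qua1}) --- in the form $\int_\alpha^t(\mu-x)f(x)\ud{x}=q(t)f(t)=\int_t^\omega(x-\mu)f(x)\ud{x}$ --- collapses the resulting double integral \emph{exactly} to $\E q(X)(g'(X))^2$, yielding the clean inequality $\E(g(X)-g(\mu))^2\le\E q(X)(g'(X))^2$ with constant $1$ and no auxiliary quantities. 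You instead run a weighted-Hardy-type argument: truncation to $[x_0,b]$, integration by parts against the survival function $\bar F$, a weighted Cauchy--Schwarz, the tail estimate $(x-\mu)\bar F(x)\le q(x)f(x)$ (which is the same use of (\ref{qua1}), just applied pointwise rather than under Fubini), and finally a self-bounding quadratic inequality plus monotone convergence. Both proofs rest on the same two ingredients, but the paper's ordering of operations (Cauchy--Schwarz first, Fubini second) makes the truncation, the splitting into tails, and the self-improvement step unnecessary, and it delivers a quantitatively sharp bound --- essentially the Chernoff-type inequality itself with centering at $g(\mu)$. Your version pays for its structure with worse, $x_0$- and $\theta$-dependent constants, though it is self-contained and the mechanism is robust (it would survive, e.g., in $L^p$ variants where the exact Fubini identity is unavailable). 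All the individual steps you flag as delicate (finiteness of $\theta$, legitimacy of dividing by $\sqrt{J_b}$ only after truncation, the forced choice $x_0>\mu$) are handled correctly.
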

\begin{pf}
Observe that $g^2(X)\leq2g^2(\mu)+2(g(X)-g(\mu))^2$.
Since
$\mu\in(\alpha,\omega)$,
\begin{eqnarray*}
\E\bigl(g(X)-g(\mu)\bigr)^2 &=&\int_{\alpha}^{\mu}f(x)
\biggl(\int_{x}^{\mu}g'(t)\,\ud{t}
\biggr)^2 \,\ud{x} +\int_{\mu}^{\omega}f(x)
\biggl(\int_{\mu}^{x}g'(t)\,\ud{t}
\biggr)^2\,\ud{x}
\\
&\leq&\int_{\alpha}^{\mu} f(x) (\mu-x) \int
_{x}^{\mu} \bigl(g'(t)
\bigr)^2\,\ud {t}\,\ud{x} + \int_{\mu}^{\omega}
f(x) (x-\mu) \int_{\mu}^{x} \bigl(g'(t)
\bigr)^2\,\ud {t}\,\ud{x}
\\
&=&\E q(X) \bigl(g'(X)\bigr)^2,
\end{eqnarray*}
by the Cauchy--Schwarz inequality and Tonelli's theorem; cf. Lemma~3.1 in \cite{PP}.
\end{pf}
%
\begin{cor}\label{cor.Eq^n g^(n)}
If $X\sim\IPq$, $\E|X|^{2n-1}<\infty$ and $g\in\CH^{n}(X)$
for some fixed $n\in\{1,2,\ldots\}$ then
$\E q^{i}(X)(g^{(i)}(X))^2 <\infty$
for all $i\in\{0,1,\ldots,n\}$. In particular,
$\Var g(X)<\infty$, that is, $g\in L^2(\R,X)$.
\end{cor}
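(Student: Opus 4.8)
The plan is to prove the whole chain by a downward induction on the index, reducing everything to the single implication
$$\E q^{i}(X)(g^{(i)}(X))^2<\infty \ \Longrightarrow\ \E q^{i-1}(X)(g^{(i-1)}(X))^2<\infty, \qquad i\in\{1,\dots,n\}.$$
Since $\E|X|^{2n-1}<\infty$ forces $\E|X|^{2i-1}<\infty$ for every $i\le n$, each step may call on exactly as many moments as it needs. Starting from $\E q^{n}(X)(g^{(n)}(X))^2<\infty$ (which is H$_2$) and iterating down to $i=1$ produces all the intermediate bounds; the bottom step $i=1$ gives $\E g^2(X)<\infty$, whence $\Var g(X)<\infty$. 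Note that for $i=1$ the weight is $q^{0}=1$ and the computation below degenerates exactly to Lemma~\ref{lem.deriv.l2}, so the generic step is a clean generalization of that lemma one derivative higher.

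For the generic step I would mimic the proof of Lemma~\ref{lem.deriv.l2}. Writing $g^{(i-1)}(x)-g^{(i-1)}(\mu)=\int_{\mu}^{x}g^{(i)}(s)\,\ud s$, applying Cauchy--Schwarz to the inner integral and then Fubini's theorem, one bounds $\E q^{i-1}(X)\big(g^{(i-1)}(X)-g^{(i-1)}(\mu)\big)^2$ by $\int_{\alpha}^{\omega}(g^{(i)}(s))^2\,\widehat K_i(s)\,\ud s$, where for $s>\mu$ the kernel is $K_i(s)=\int_{s}^{\omega}(x-\mu)\,q^{i-1}(x)f(x)\,\ud x$ and symmetrically on $(\alpha,\mu)$. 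The leftover constant $(g^{(i-1)}(\mu))^2\,\E q^{i-1}(X)$ is finite because $\E|X|^{2i-2}<\infty$. Hence the step reduces to proving the kernel estimate $\widehat K_i(s)\le C\,q^{i}(s)f(s)$, since this yields at once $\E q^{i-1}(X)(g^{(i-1)}(X))^2\le C'\,\E q^{i}(X)(g^{(i)}(X))^2<\infty$.

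The heart of the matter is this kernel estimate, and it is precisely here that the moment threshold appears. Differentiating the defining relation (\ref{qua1}) gives $(\mu-x)f=(qf)'$, so $qf'=(\mu-x-q')f$, and a direct computation yields $(q^{i}f)'=\big((i-1)q'(x)+\mu-x\big)q^{i-1}(x)f(x)$. Integrating $(x-\mu)q^{i-1}f=-(q^{i}f)'+(i-1)q'\,q^{i-1}f$ from $s$ to $\omega$, using $q^{i}f\to0$ at $\omega$ and $q'(x)=2\delta(x-\mu)+q'(\mu)$, I obtain the exact identity
$$\big(1-2(i-1)\delta\big)\,K_i(s)=q^{i}(s)f(s)+(i-1)\,q'(\mu)\int_{s}^{\omega}q^{i-1}(x)f(x)\,\ud x.$$
The factor $1-2(i-1)\delta$ is positive exactly when $\delta<1/(2i-2)$, i.e.\ exactly when $\E|X|^{2i-1}<\infty$ by Lemma~\ref{lem.moments}; this is what guarantees both convergence of the tail integral defining $K_i$ and positivity of the right-hand side. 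The residual integral $\int_{s}^{\omega}q^{i-1}f$ is of strictly lower order than $q^{i}(s)f(s)$ and, by the endpoint/tail asymptotics of the density $f$, is dominated by a constant multiple of $q^{i}(s)f(s)$; the same holds for the mirror kernel on $(\alpha,\mu)$. Combining, $\widehat K_i\le C\,q^{i}f$, which closes the step.

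The main obstacle I anticipate is exactly this kernel bound: one must check that $1-2(i-1)\delta>0$, and this threshold is sharp for the top step $i=n$, where it is equivalent to the standing hypothesis $\E|X|^{2n-1}<\infty$; and one must verify that the correction term $\int_{s}^{\omega}q^{i-1}f$ is genuinely subordinate to $q^{i}f$ near each endpoint of $(\alpha,\omega)$. By contrast, the book-keeping of the boundary terms in the integration by parts (all of which vanish since $q^{i}f\to0$ at $\alpha$ and $\omega$) and the verification that Fubini's theorem applies to the nonnegative integrands are comparatively harmless.
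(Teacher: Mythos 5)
Your overall architecture --- a downward induction reducing everything to the one-step implication $\E q^{i}(X)(g^{(i)}(X))^2<\infty \Rightarrow \E q^{i-1}(X)(g^{(i-1)}(X))^2<\infty$, modeled on Lemma \ref{lem.deriv.l2} --- is exactly the paper's, and your moment bookkeeping (the threshold $1-2(i-1)\delta>0$ via Lemma \ref{lem.moments}, the vanishing boundary terms via Lemma \ref{lem.limits2}) is correct. But there is a genuine gap at what you yourself call the heart of the matter. Because you expand $g^{(i-1)}$ around the original mean $\mu$, your (correctly derived) kernel identity picks up the residual term $(i-1)\,q'(\mu)\int_{s}^{\omega}q^{i-1}f$, and the whole step then hinges on the claim that $\int_{s}^{\omega}q^{i-1}(x)f(x)\,\ud{x}\leq C\,q^{i}(s)f(s)$ near each endpoint (and its mirror near $\alpha$). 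You assert this ``by the endpoint/tail asymptotics of the density $f$,'' but no such asymptotics are established, and for a general member of the integrated Pearson family --- finite endpoints where $q$ vanishes, Gaussian, exponential, and polynomial tails with $\delta>0$ --- verifying them case by case is precisely the kind of argument the corollary is meant to avoid. As written, the crucial kernel estimate is asserted, not proven.

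The fix, which is the paper's actual route, is to change the center of expansion from $\mu$ to $\mu_{i-1}=\frac{\mu+(i-1)\beta}{1-2(i-1)\delta}$, the mean of the tilted density $f_{i-1}=q^{i-1}f/\E q^{i-1}(X)$. By Theorem \ref{theo.star}, $X_{i-1}\sim f_{i-1}$ is again integrated Pearson with quadratic $q_{i-1}=q/(1-2(i-1)\delta)$, so the defining relation (\ref{qua1}) applied to $X_{i-1}$ gives the \emph{exact} kernel identity
\[
\int_{s}^{\omega}(x-\mu_{i-1})\,q^{i-1}(x)f(x)\,\ud{x}=\frac{q^{i}(s)f(s)}{1-2(i-1)\delta},
\]
with no residual term at all; indeed your own identity rearranges into this one, since $\mu-\mu_{i-1}=-(i-1)q'(\mu)/(1-2(i-1)\delta)$, so the correction you were trying to dominate is exactly the recentering term. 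With that center, your Cauchy--Schwarz/Fubini computation becomes literally the proof of Lemma \ref{lem.deriv.l2} applied to the pair $(g^{(i-1)},X_{i-1})$, which is how the paper carries out each inductive step: $\E q_{i-1}(X_{i-1})(g^{(i)}(X_{i-1}))^2<\infty$ by the inductive hypothesis, hence $\E(g^{(i-1)}(X_{i-1}))^2<\infty$ by the lemma, hence $\E q^{i-1}(X)(g^{(i-1)}(X))^2<\infty$. So your proposal is salvageable, but only by importing the tilted-Pearson structure (Theorem \ref{theo.star}) that closes the hole.
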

\begin{pf}
According to Theorem~\ref{theo.star}, the assumptions on $X$
enable us to define the random variables $X_k$ with densities
\[
f_k(x)=\frac{q^k(x)f(x)}{\E q^k(X)},\qquad  \alpha<x<\omega, k=0,1,\ldots,n-1,
\]
where $(\alpha,\omega)$
is the support of $X$ (and of each $X_k$). If
$q(x)=\delta x^2+\beta x+\gamma$
is the quadratic of $X,$ then
$X_k\sim\mbox{IP}(\mu_k;q_k)$
with mean $\mu_k$ and quadratic $q_k$
given by
\[
\mu_k=\frac{\mu+k\beta}{1-2k\delta}, \qquad  q_k(x)=
\frac{\delta x^2
+\beta x+\gamma}{1-2k\delta} =\delta_k x^2+\beta_k x+
\gamma_k, \qquad k=0,1,\ldots,n-1.
\]
Set $\widetilde{g}=g^{(n-1)}$,
$\widetilde{\mu}=\mu_{n-1}$,
$\widetilde{q}=q_{n-1}$,
$\widetilde{X}=X_{n-1}$
and observe that
$\widetilde{X}\sim\IPtilde$
and
\[
\E \widetilde{q} (\widetilde{X}) \bigl(\widetilde{g}'(
\widetilde{X})\bigr)^2 =\frac{\E q^{n}(X)(g^{(n)}(X))^2}{(1-(2n-2)\delta)
\E q^{n-1}(X)}<\infty,
\]
because $g\in\CH^n(X)$ so that the numerator is finite.
[In view of
Lemma~\ref{lem.moments},
$\E|X|^{2n-1}<\infty$ implies
the inequality $(2n-2)\delta<1$; moreover,
$\deg(q^{n-1})\leq2n-2$ shows that
$0<\E q^{n-1}(X)<\infty$.]
An application of Lemma~\ref{lem.deriv.l2} to
$\widetilde{g}$,
$\widetilde{X}$
shows that
$\E\widetilde{g}^2(\widetilde{X})<\infty$, and thus,
%
\[
\E q^{n-1}(X) \bigl(g^{(n-1)}(X)\bigr)^2= \E
\widetilde{g}^2(\widetilde{X})\E q^{n-1}(X)<\infty.
\]
Hence, $g\in\CH^{n-1}(X)$. Continuing inductively the result follows.
\end{pf}

Turn now to the case where $X\sim\IP$ with
$\delta\leq0$.
It follows that all moments exist and, moreover,
the moment generating function
of $X$ is finite in a neighborhood of zero
(see~\cite{AP2}, Table~2.1, types 1--3).
Then, it is well known that the orthonormalized
polynomial system
$\{\phi_k\}_{k=0}^{\infty}$, given by (\ref{orthonormal})
(with $n=\infty$),
is complete in $L^2(\R,X)$; see, for example, \cite{BergCrist,APP2};
see also Remark~\ref{rem.complete}, below.
Consider a function $g\in\CH^n(X)$ for some fixed
$n\in\{1,2,\ldots\}$.
Since $\CH^n(X)\subseteq L^2(\R,X)$, $g$
can be expanded as
\begin{equation}\label{Fourier.g}
g(x)\sim\sum_{k=0}^{\infty}
\alpha_k \phi_k(x),
\end{equation}
where $\alpha_k=\E\phi_k(X)g(X)$ are the Fourier
coefficients of $g$.
The series converges
in the norm of $L^2(\R,X)$, that is,
$\E[g(X)-\sum_{k=0}^N \alpha_k \phi_k(X)]^2\to0$
as $N\to\infty$. Parseval's identity shows that
\begin{equation}\label{Parseval.g}
\Var g(X)=\sum_{k=1}^{\infty}
\alpha_k^2, \qquad g\in L^2(\R,X).
\end{equation}
On the other hand, since $g\in\CH^n(X)$,
(\ref{Fourier2}) yields the expression
\[
\alpha_k=\frac{\E q^k(X) g^{(k)}(X)}{\sqrt{k!c_k(\delta)
\E q^k(X)}} \qquad \mbox{for } k=1,2,\ldots,n,
\]
where $c_k(\delta)=\prod_{j=k-1}^{2k-2}(1-j\delta)$,
see (\ref{lead}),
and $\E q^k(X)$
is given explicitly in (\ref{Eq^k}).
Thus, in the particular case where $g\in\CH^n(X)$,
(\ref{Parseval.g}) produces the equivalent formula
\begin{equation}\label{Parseval.g.2}
\Var g(X)=\sum_{k=1}^{n}
\frac{\E^2 q^k(X) g^{(k)}(X)}{k!c_k(\delta)\E q^k(X)} +\sum_{k=n+1}^\infty
\alpha_k^2,\qquad  g\in\CH^n(X).
\end{equation}

Now, consider the following heuristic derivation:
Formally, we differentiate term by term
($n$ times) the series (\ref{Fourier.g}) to get,
in view of Theorem~\ref{theor.derivatives},
the expansion
\begin{equation}\label{Fourier.g.n}
g^{(n)}(x)\sim\sum_{k=0}^{\infty}
\alpha_{k+n} \phi_{k+n}^{(n)}(x) =\sum
_{k=0}^{\infty} \nu_k^{(n)}
\alpha_{k+n} \phi_{k,n}(x).
\end{equation}
Let $\lead(P)$ be the leading coefficient of a polynomial $P$. The
constants $\nu_k^{(n)}=\nu_k^{(n)}(\mu;q)$
are given by (\ref{orthonormal.constants}) and
$\{\phi_{k,n}(x)\}_{k=0}^{\infty}$
(with $\lead(\phi_{k,n})>0$)
is the
orthonormal
polynomial system
corresponding to
$X_n$ with density
$f_n=q^n f/\E q^n(X)$; $\phi_{k,n}$ is a (positive)
scalar multiple of the polynomial $P_{k,n}$
given in (\ref{polyn.orthogonal.gen.star}).
Now, if the expansion (\ref{Fourier.g.n}) was indeed correct
in the $L^2(\R,X_n)$-sense, then
the completeness of the system $\{\phi_{k,n}\}_{k=0}^{\infty}$
in $L^2(\R,X_n)$ would result to the corresponding Parseval
identity:
\begin{equation}\label{Parseval.g.n}
\frac{\E q^n(X) (g^{(n)}(X) )^2}{\E q^n(X)} =\E \bigl(g^{(n)}(X_n)
\bigr)^2 =\sum_{k=0}^{\infty} \bigl(
\nu_k^{(n)} \bigr)^2 \alpha_{k+n}^2,\qquad
g\in\CH^n(X).
\end{equation}
Finally, from (\ref{orthonormal.constants}) we have
\[
\bigl(\nu_k^{(n)}\bigr)^2=\frac{(k+n)!}{k!\E
q^n(X)}
\prod_{j=k+n-1}^{k+2n-2}(1-j\delta).
\]
A combination of the last equation with (\ref{Parseval.g.n})
yields the identity
\begin{eqnarray}\label{Eq^n g^(n)}
\E q^n(X)
\bigl(g^{(n)}(X)\bigr)^2 &=& \sum_{k=0}^{\infty}
\frac{(k+n)!\prod_{j=k+n-1}^{k+2n-2}(1-j\delta)}{k!}
\alpha_{k+n}^2\nonumber\\[-8pt]\\[-8pt]
 &=& \sum
_{k=n}^{\infty} \frac{k!\prod_{j=k-1}^{k+n-2}(1-j\delta)}{(k-n)!} \alpha_{k}^2.\nonumber
\end{eqnarray}
This must be correct for all $g\in\CH^n(X)$, provided that
expansion (\ref{Fourier.g.n}) is valid. However, the above
arguments are heuristic; they are not
sufficient even to conclude convergence of the series
(\ref{Eq^n g^(n)}) or (\ref{Parseval.g.n}).
Notice that the same technicality appeared in Chernoff's \cite{Cher}
proof, although in this case the polynomials are the well-known
Hermite polynomials (with derivatives again Hermite, i.e., orthogonal
to the same
weight function, the normal density).
Chernoff overcame this difficulty by applying Weierstrass
(uniform) approximations to $g$ in compact intervals.

In the sequel, we shall make the above arguments rigorous
by applying a different technique, in the spirit of
Sturm--Liouville theory. In fact, we shall show more,
namely,
that an initial segment of the Fourier
coefficients for the $n$th derivative of $g$,
suggested by (\ref{Fourier.g.n}),
can be derived for any $X\sim\IP$
having a sufficient number of moments. This result holds
even if $\delta>0$, noting that if $\delta>0$ then $X$
possesses
only a finite number of moments.
Specifically, the following result,
which may have some interest in itself,
holds true.
%
\begin{lem}\label{lem.giwrgis}
Assume that $X$ has density $f$, support
$(\alpha,\omega)$,
$X\sim\IP$ and $\E|X|^{2N}<\infty$ for some $N\geq1$,
that is, $\delta<\frac{1}{2N-1}$.
Let $\{\phi_k\}_{k=0}^N\subseteq L^{2}(\R,X)$
be the orthonormal polynomial system associated with $X$
(where, to be specific, assume that $\lead(\phi_k)>0$).
Then,
for every $x\in(\alpha,\omega)$,
%
\begin{eqnarray}\label{der.poly}
q(x)f(x)\phi_k'(x)&=&-
\lambda_k(\delta) \int_\alpha^x
\phi_k(y)f(y)\,\ud{y}\nonumber\\[-8pt]\\[-8pt]
&=&\lambda_k(\delta) \int
^\omega_x \phi_k(y)f(y)\,\ud{y},\qquad
k=1,2,\ldots,N,\nonumber
\end{eqnarray}
where $\lambda_k(\delta):=k(1-(k-1)\delta)$.
Moreover, if $g\in\CH^n(X)$ for some $n\in\{1,2,\ldots,N\}$
then
\begin{equation}\label{Fourier.g.n.2}
\E\phi_{k,n}(X_n)g^{(n)}(X_n)=
\nu_k^{(n)} \E\phi_{k+n}(X)g(X), \qquad k=0,1,\ldots,N-n,
\end{equation}
where $X_n$ has density $f_n=q^n f/\E q^n(X)$,
\[
\nu_k^{(n)}=\sqrt{\frac{(k+n)!}{k!}
\frac{\prod_{j=k+n-1}^{k+2n-2}(1-j\delta)}{\E q^n(X)}}
\]
is given by (\textup{\ref{orthonormal.constants}}) and
$\{\phi_{k,n}\}_{k=0}^{N-n}\subseteq L^2(\R,X_n)$
is the orthonormal polynomial system corresponding to $X_n$,
with $\lead(\phi_{k,n})>0$.
\end{lem}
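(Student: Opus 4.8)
The plan is to prove the two assertions in order: first the pointwise identity (\ref{der.poly}), which encodes the Sturm--Liouville structure of the integrated Pearson family, and then to bootstrap it, through $n$ successive integrations by parts, into the Fourier-coefficient identity (\ref{Fourier.g.n.2}). The starting point is that differentiating the defining relation (\ref{qua1}) turns it into the Pearson differential equation $(q(x)f(x))'=(\mu-x)f(x)$; consequently the operator $L\phi:=(q f\phi')'/f=q\phi''+(\mu-x)\phi'$ sends each polynomial of degree $k$ to a polynomial of degree at most $k$, and it is symmetric on $L^2(\R,X)$ since integration by parts gives $\E[(L\phi)(X)\psi(X)]=-\E[q(X)\phi'(X)\psi'(X)]$, which is symmetric in $\phi,\psi$.

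To obtain (\ref{der.poly}), I would first record that the boundary quantities $q(x)f(x)\phi'(x)\psi(x)$ vanish at $\alpha$ and at $\omega$: indeed $q(x)f(x)=\int_\alpha^x(\mu-t)f(t)\ud t\to 0$ at both endpoints, the total integral being $\mu-\E X=0$, and the hypothesis $\E|X|^{2N}<\infty$ dominates the polynomial growth for the degrees $\le N$ in play. Because $L$ preserves the degree filtration and is symmetric, each $\phi_k$ with $0\le k\le N$ is an eigenfunction; matching the leading coefficient of $q\phi_k''+(\mu-x)\phi_k'$ identifies the eigenvalue as $-\lambda_k(\delta)$ with $\lambda_k(\delta)=k(1-(k-1)\delta)$. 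Writing the eigenrelation as $(q f\phi_k')'=-\lambda_k(\delta)\phi_k f$ and integrating from $\alpha$ to $x$, the lower boundary term dropping out as above, gives the first equality in (\ref{der.poly}); the second follows at once since $\int_\alpha^\omega\phi_k f=\E\phi_k(X)=0$ for $k\ge 1$.

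For (\ref{Fourier.g.n.2}) I would argue by induction on $n$, the engine being a single integration by parts. Using (\ref{der.poly}) to write $\phi_{k+n}(x)f(x)=-\lambda_{k+n}(\delta)^{-1}(qf\phi_{k+n}')'$ and integrating by parts once (the boundary term again vanishing, now invoking $g\in\CH^n(X)$ and Corollary \ref{cor.Eq^n g^(n)} for the requisite integrability) yields $\E\phi_{k+n}(X)g(X)=\lambda_{k+n}(\delta)^{-1}\int_\alpha^\omega q(x)f(x)\phi_{k+n}'(x)g'(x)\ud x$. By Theorem \ref{theor.derivatives} one has $\phi_{k+n}'=\nu_{k+n-1}^{(1)}\phi_{k+n-1,1}$, and since $q(x)f(x)=\E q(X)\,f_1(x)$ the right-hand side becomes $\lambda_{k+n}(\delta)^{-1}\E q(X)\,\nu_{k+n-1}^{(1)}\,\E\phi_{k+n-1,1}(X_1)g'(X_1)$, i.e.\ a Fourier coefficient of $g'$ relative to the system of $X_1$. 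The base case $n=1$ then reduces to the arithmetic identity $(\nu_k^{(1)})^2\E q(X)=\lambda_{k+1}(\delta)$, which is immediate from the explicit form of $\nu_k^{(1)}$.

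For the inductive step I would apply this one-step identity to pass from $(X,g)$ to $(X_1,g')$, check that $g'\in\CH^{n-1}(X_1)$ (a consequence of $g\in\CH^n(X)$ together with the identification of $X_1$ as $\mbox{IP}(\mu_1;q_1)$ and Corollary \ref{cor.Eq^n g^(n)}), and invoke the induction hypothesis for $X_1$ and $g'$, noting that the $(n-1)$-st member of the chain built from $X_1$ is $X_n$ and that its $k$-th orthonormal polynomial is $\phi_{k,n}$. The constants then telescope through $\nu_k^{(n)}(\mu;q)=\nu_{k+n-1}^{(1)}(\mu;q)\,\nu_k^{(n-1)}(\mu_1;q_1)$, which is precisely the relation $\phi_{k+n}^{(n)}=\nu_k^{(n)}\phi_{k,n}$ read off from $\phi_{k+n}^{(n)}=(\phi_{k+n}')^{(n-1)}$. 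The principal obstacle throughout is the rigorous justification of the vanishing boundary terms and of the repeated integration by parts under the sole moment hypothesis $\E|X|^{2N}<\infty$, especially when the support is unbounded; the restriction $k\le N-n$, which keeps every polynomial appearing of degree at most $N$, is exactly what renders these terms integrable and legitimate.
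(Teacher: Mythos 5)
Your treatment of (\ref{der.poly}) is sound: deriving the Sturm--Liouville equation $[q(x)f(x)\phi_k'(x)]'=-\lambda_k(\delta)\phi_k(x)f(x)$ from the symmetry of $L\phi=q\phi''+(\mu-x)\phi'$ on polynomials (the boundary terms $qf\phi'\psi$ there involve only polynomials of degree at most $2N-1$, so Lemma \ref{lem.limits2} applies) is a legitimate, self-contained alternative to the paper's citation of Diaconis and Zabell; the subsequent integration from $\alpha$ to $x$ and the use of $\E\phi_k(X)=0$ coincide with the paper. The inductive skeleton you propose for (\ref{Fourier.g.n.2}), passing from $(X,g)$ to $(X_1,g')$ and telescoping the constants $\nu_k^{(n)}$, is also the paper's.

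The gap is in your one-step identity. You integrate by parts in $\E \phi_{k+n}(X)g(X)$ and discard the boundary terms $q(x)f(x)\phi_{k+n}'(x)g(x)$ at $\alpha$ and $\omega$, ``invoking $g\in\CH^n(X)$ and Corollary \ref{cor.Eq^n g^(n)}.'' But that corollary provides only integrability statements ($\E q^i(X)(g^{(i)}(X))^2<\infty$, $g\in L^2(\R,X)$); it gives no pointwise control of $g$ near the endpoints, and Lemma \ref{lem.limits2} applies only to polynomials. For a general $g\in\CH^n(X)$ --- possibly unbounded, on a possibly unbounded support --- the vanishing of these boundary terms is precisely the technical crux of the lemma: the paper flags exactly this issue (recalling that Chernoff had to resort to Weierstrass approximation for the same reason) and spends essentially the whole proof circumventing it. It never integrates by parts; instead it writes $\E\phi_{k,1}(X_1)g'(X_1)$ as a double integral via (\ref{der.poly}), splits the outer integral at a point $\rho$ between the extreme roots $\rho_1,\rho_m$ of $\phi_{k+1}$ (using the representation $-\lambda_{k+1}(\delta)\int_\alpha^x\phi_{k+1}f$ on $(\alpha,\rho)$ and $\lambda_{k+1}(\delta)\int_x^\omega\phi_{k+1}f$ on $(\rho,\omega)$), and justifies the interchange of integration by the estimates $I_1^*,I_2^*<\infty$, where the constancy of sign of $\phi_{k+1}$ outside $[\rho_1,\rho_m]$ is what allows those absolute integrals to be bounded by $\E|\phi_{k,1}(X_1)g'(X_1)|$ together with a compactness estimate on $[\rho,\rho_m]$. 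Your step could in principle be repaired --- since both integrals in the parts formula converge absolutely, the boundary limits exist, and one can argue that a nonzero limit would force $\E g^2(X)=\infty$ because $\int \ud{x}/(q^2(x)f(x))$ diverges at both endpoints --- but this is a substantive argument (requiring a case analysis over finite and infinite endpoints) that you have not supplied. As written, the decisive step of the lemma is asserted rather than proved.
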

\begin{pf}
From (\ref{qua1}) it follows that
\[
\frac{f'(x)}{f(x)}=\frac{\mu-x-q'(x)}{q(x)} =\frac{-(1+2\delta)x+(\mu-\beta)}{\delta x^2+\beta x+\gamma}, \qquad \alpha<x<\omega.
\]
Consider the polynomials $P_k$ defined in (\ref{Rodrigues}).
By (\ref{orthonormal}), each $\phi_k$
is a scalar multiple of the Rodrigues-type polynomial
$h_k=D^k[q^k f]/f=(-1)^k P_k$.
%
Hence, Theorem~1 of Diaconis and Zabell \cite{DZ} (see,
also, equation (4.4) in \cite{AP2})
implies that
\begin{equation}\label{sl2}
\bigl[q(x)f(x)\phi_k'(x)
\bigr]'=-\lambda_k(\delta) \phi_k(x)f(x),\qquad
\alpha<x<\omega,
k=1,2,\ldots,N.
\end{equation}
Fix $t$ and $x$ with $\alpha<t<x<\omega$
and integrate (\ref{sl2}) over
the interval $[t,x]$ to get
\[
-\lambda_k(\delta)\int_{t}^x
\phi_k(y)f(y)\,\ud{y}=q(x)f(x)\phi_k'(x)-q(t)f(t)
\phi_k'(t);
\]
thus, taking limits as $t\searrow\alpha$ we see that the l.h.s.
converges to
$-\lambda_k(\delta)\int_{\alpha}^x \phi_k(y)f(y)\,\ud{y}$,
by dominated convergence, while the r.h.s. tends to $q(x)f(x)\phi_k'(x)$ because, by Lemma~\ref{lem.limits2}, $\lim_{t\searrow\alpha} q(t)f(t)h(t)=0$
for any polynomial $h$ with $\deg(h)\leq2N-1$. This verifies the
first equality in (\ref{der.poly}), while the second one
is obvious since $\E\phi_k(X)=0$ (because
$\phi_k$ is orthogonal to $\phi_0\equiv1$).

Fix now an integer $k\in\{0,1,\ldots,N-1\}$.
Observing that
$\deg(q(x)x^{2k})\leq2k+2\leq2N$ we have
$\E(X_1^{k})^2=\E q(X)X^{2k}/\E q(X)<\infty$. Thus,
the Rodrigues-type polynomial $P_{k,1}$
(see (\ref{polyn.orthogonal.gen.star}) with $m=1$)
belongs to $L^2(\R,X_1)$.
By Corollary~\ref{cor.Eq^n g^(n)}, $\E(g'(X_1))^2$ is also finite.
Indeed,
$n\leq N$ implies that $\E|X|^{2n-1}<\infty$ so that
$g\in\CH^n(X)\subseteq\CH^1(X)$ and, therefore,
\[
\E\bigl(g'(X_1)\bigr)^2=\frac{1}{\E q(X)}
\E q(X) \bigl(g'(X)\bigr)^2<\infty.
\]
%
Hence,
the Fourier coefficient
of $g'$ with respect to $\phi_{k,1}$, $\E\phi_{k,1}(X_1)g'(X_1)$,
is well-defined (and finite):
\[
\E^2\bigl|\phi_{k,1}(X_1)g'(X_1)\bigr|
\leq \E\bigl(\phi_{k,1}(X_1)\bigr)^2 \E
\bigl(g'(X_1)\bigr)^2=\E
\bigl(g'(X_1)\bigr)^2<\infty.
\]
Let
$\rho_1<\rho_2<\cdots<\rho_m$
be the distinct roots of $\phi_{k+1}$ that lie into the interval
$(\alpha,\omega)$. Clearly, $1\leq m\leq k+1$ because
$\E\phi_{k+1}(X)=0$ and $\deg(\phi_{k+1})=k+1$.
Fix now a number
$\rho\in[\rho_1,\rho_m]\subseteq(\alpha,\omega)$.
From
(\ref{eq.paragwgos}),
we see that $\phi_{k,1}(x)=\phi'_{k+1}(x)/\nu_k^{(1)}$
where $\nu_k^{(1)}=\sqrt{(k+1)(1-k\delta)/\E q(X)}$.
Therefore, using (\ref{der.poly}), we have
\begin{eqnarray*}
\E\phi_{k,1}(X_1)g'(X_1) &=&
\frac{1}{\E q(X)} \int_{\alpha}^{\omega}
g'(x) q(x)f(x)\phi_{k,1}(x) \,\ud{x}
\\
&=& \frac{1}{\nu_k^{(1)}\E q(X)} \int_{\alpha}^{\omega}
g'(x) q(x)f(x)\phi'_{k+1}(x) \,\ud{x}
\\
&=& \frac{-\lambda_{k+1}(\delta)}{\nu_k^{(1)}\E q(X)} \int_{\alpha
}^{\rho}
g'(x) \int_{\alpha}^{x} f(y)
\phi_{k+1}(y)\,\ud{y} \,\ud{x}
\\
&&{} +\frac{\lambda_{k+1}(\delta)}{\nu_k^{(1)}\E q(X)} \int_{\rho
}^{\omega}
g'(x) \int_{x}^{\omega} f(y)
\phi_{k+1}(y)\,\ud{y} \,\ud{x}.
\end{eqnarray*}
Observing that
\[
\frac{\lambda_{k+1}(\delta)}{\nu_k^{(1)}\E q(X)} =\frac{(k+1)(1-k\delta)}{\E q(X)\sqrt{(k+1)(1-k\delta)/\E q(X)}} 
=\nu_k^{(1)},
\]
the preceding equation
can be rewritten as
\begin{equation}\label{eq.2.10}
\E\phi_{k,1}(X_1)g'(X_1)
=\nu_k^{(1)} (I_2-I_1),
\end{equation}
where
\begin{equation}\label{eq.2.11}
I_1:= \int_{\alpha}^{\rho}
g'(x) \int_{\alpha}^{x} f(y)
\phi_{k+1}(y)\,\ud{y} \,\ud{x},\qquad  I_2:= \int_{\rho}^{\omega}
g'(x) \int_{x}^{\omega} f(y)
\phi_{k+1}(y)\,\ud{y} \,\ud{x}.
\end{equation}
Now, we wish to change the order of integration to both integrals
$I_1$ and $I_2$. To this end, for $I_2$ it suffices to show that
\begin{equation}\label{eq.2.12}
I_2^*:= \int_{\rho}^{\omega}
\bigl|g'(x)\bigr| \int_{x}^{\omega} f(y)\bigl|
\phi_{k+1}(y)\bigr|\,\ud{y} \,\ud{x}<\infty.
\end{equation}
Similarly, for $I_1$
it suffices to show that
$I_1^*:= \int_{\alpha}^{\rho} |g'(x)|
\int_{\alpha}^{x} f(y)|\phi_{k+1}(y)|\,\ud{y} \,\ud{x}<\infty$.
We now proceed to verify (\ref{eq.2.12}).
Write $I_2^*=I_{21}^*+I_{22}^*$ where
\begin{eqnarray*}
I_{21}^*&:=& \int_{\rho}^{\rho_m}
\bigl|g'(x)\bigr| \int_{x}^{\omega} f(y)\bigl|
\phi_{k+1}(y)\bigr|\,\ud{y} \,\ud{x}, \\
 I_{22}^*&:=& \int
_{\rho_m}^{\omega} \bigl|g'(x)\bigr| \int
_{x}^{\omega} f(y)\bigl|\phi_{k+1}(y)\bigr|\,\ud{y}
\,\ud{x}.
\end{eqnarray*}
Since the polynomial
$\phi_{k+1}$ does not change sign in the interval $(\rho_m,\omega)$,
we can define the constant $\pi$ as
\[
\pi:=\sign\bigl(\phi_{k+1}(x)\bigr)\in\{-1,1\},\qquad  \rho_m<x<
\omega.
\]
Then, $\pi\phi_{k+1}(x)=|\phi_{k+1}(x)|$ holds
for all $x\in(\rho_m,\omega)$ and
from (\ref{der.poly}) we get
\begin{eqnarray*}
I_{22}^* &=& \pi\int_{\rho_m}^{\omega}
\bigl|g'(x)\bigr| \int_{x}^{\omega} f(y)
\phi_{k+1}(y) \,\ud{y} \,\ud{x} =\frac{\pi}{\lambda_{k+1}(\delta)} \int_{\rho_m}^{\omega}
\bigl|g'(x)\bigr| q(x)f(x)\phi_{k+1}'(x)\,\ud{x}
\\
&\leq& \frac{1}{\lambda_{k+1}(\delta)} \int_{\rho_m}^{\omega}
\bigl|g'(x)\bigr| q(x)f(x)\bigl|\phi_{k+1}'(x)\bigr|\,\ud{x}
\\
&\leq& \frac{1}{\lambda_{k+1}(\delta)} \int_{\alpha}^{\omega}
\bigl|g'(x)\bigr| q(x)f(x)\bigl|\phi_{k+1}'(x)\bigr|\,\ud{x} =
\frac{1}{\lambda_{k+1}(\delta)} \E q(X) \bigl|\phi_{k+1}'(X)g'(X)\bigr|
\\
&=& \frac{\nu_k^{(1)}}{\lambda_{k+1}(\delta)} \E q(X) \bigl|\phi_{k,1}(X)g'(X)\bigr|
=\frac{1}{\nu_k^{(1)}}\E\bigl|\phi_{k,1}(X_1)g'(X_1)\bigr|<
\infty.
\end{eqnarray*}
This shows that $I_{22}^*<\infty$. On the other hand, the function
$x\mapsto q(x)f(x)$ is strictly positive and continuous for
$x$ in the compact interval
$[\rho,\rho_m]\subseteq(\alpha,\omega)$, so that,
$\theta:=\min\{q(x)f(x):{\rho\leq x\leq\rho_m}\}>0$.
Then, from the fact that $g\in\CH^1(X)$, we get
\begin{eqnarray*}
\int_{\rho}^{\rho_m} \bigl|g'(x)\bigr|\, \ud{x} &\leq&
\frac{1}{\theta}\int_{\rho}^{\rho_m}
q(x)f(x)\bigl|g'(x)\bigr|\,\ud{x} \leq\frac{1}{\theta}\E q(X)\bigl|g'(X)\bigr|
\\
&\leq& \frac{1}{\theta}\sqrt{\E q(X)\E q(X) \bigl(g'(X)
\bigr)^2} <\infty.
\end{eqnarray*}
Moreover, for any $u_1,u_2$ with $\alpha\leq u_1\leq u_2\leq\omega$
it is readily seen that
\[
\int_{u_1}^{u_2} \bigl|\phi_{k+1}(y)\bigr| f(y)\,\ud{y}
\leq \int_{\alpha}^{\omega} \bigl|\phi_{k+1}(y)\bigr| f(y)
\,\ud{y}= \E\bigl|\phi_{k+1}(X)\bigr|:=M_{k+1}<\infty.
\]
Combining the above, we conclude that
\[
I_{21}^*= \int_{\rho}^{\rho_m}
\bigl|g'(x)\bigr| \int_{x}^{\omega} f(y)\bigl|
\phi_{k+1}(y)\bigr|\,\ud{y}\, \ud{x} \leq M_{k+1}\int
_{\rho}^{\rho_m} \bigl|g'(x)\bigr| \,\ud{x} 
<\infty.
\]
Therefore, $I_2^*=I_{21}^*+I_{22}^*<\infty$
and (\ref{eq.2.12}) follows. Using similar arguments it
is shown that $I_1^*<\infty$. Thus,
we can indeed interchange
the order of integration to both integrals
$I_1$ and $I_2$ of (\ref{eq.2.11}). It follows that
\begin{eqnarray*}
I_2 &=& \int_{\rho}^{\omega} f(y)
\phi_{k+1}(y) \int_{\rho}^{y}
g'(x) \,\ud{x} \,\ud{y} \\
&=& \int_{\rho}^{\omega}
f(y)\phi_{k+1}(y)g(y)\,\ud{y} -g(\rho) \int_{\rho}^{\omega}
f(y)\phi_{k+1}(y)\,\ud{y}
\end{eqnarray*}
and, similarly,
\[
I_1 =g(\rho)\int_{\alpha}^{\rho} f(y)
\phi_{k+1}(y)\,\ud{y} -\int_{\alpha}^{\rho} f(y)
\phi_{k+1}(y)g(y)\,\ud{y}.
\]
Taking into account the fact that
$\int_{\alpha}^{\omega} f(y)\phi_{k+1}(y)\,\ud{y}=\E
\phi_{k+1}(X)=0$, we get
\[
I_2-I_1=\int_{\alpha}^{\omega}
f(y)\phi_{k+1}(y)g(y)\,\ud{y} -g(\rho)\int_{\alpha}^{\omega}
f(y)\phi_{k+1}(y)\,\ud{y}= \E\phi_{k+1}(X)g(X).
\]
Finally, from (\ref{eq.2.10}), we conclude that
\begin{equation}\label{giwrgis}
\E\phi_{k,1}(X_1)g'(X_1)
= 
\sqrt{\frac{(k+1)(1-k\delta)}{\E q(X)}} \E\phi_{k+1}(X)g(X),\qquad
k=0,1,\ldots,N-1.
\end{equation}
So far we have shown that $g\in\CH^n(X)$
and $\E|X|^{2N}<\infty$ for some $N\geq n$
implies that $g\in\CH^1(X)$ and
(\ref{giwrgis}) is fulfilled.
Assume now that
for some $i\in\{1,2,\ldots,n-1\}$ we
have shown that
$g\in\CH^i(X)$ and that
for every $k\in\{0,1,\ldots,N-i\}$,
\begin{equation}\label{giwrgis2}
\E\phi_{k,i}(X_i)g^{(i)}(X_i)
= 
\sqrt{\frac{(k+i)!}{k!}
\frac{\prod_{j=k+i-1}^{k+2i-2}(1-j\delta)}{\E q^i(X)}} \E\phi_{k+i}(X)g(X). 
\end{equation}
Clearly, we can apply (\ref{giwrgis})
for $g=g^{(i)}$, $X=X_i$ and for $k=0,1,\ldots,\widetilde{N}-1$,
provided that $\E|X_i|^{2\widetilde{N}}<\infty$.
Observing that $\E|X_i|^{2\widetilde{N}}
=\frac{\E q^i(X)|X|^{2\widetilde{N}}}{\E q^i(X)}$
it follows that $\widetilde{N}=N-i$ is a suitable choice.
Therefore, for $k=0,1,\ldots,N-i-1$, (\ref{giwrgis}) yields
\[
\E\phi_{k,i+1}(X_{i+1})g^{(i+1)}(X_{i+1}) =
\sqrt{\frac{(k+1)(1-k\delta_i)}{\E q_i(X_i)}} \E\phi_{k+1,i}(X_i)g^{(i)}(X_i),
\]
where
$\delta_i=\frac{\delta}{1-2i\delta}$, $q_i(x)=\frac
{q(x)}{1-2i\delta}$
(see Theorem~\ref{theo.star}) and, thus,
\[
\E q_i(X_i)=\frac{\E q(X_i)}{1-2i\delta} =\frac{\E q^{i+1}(X)}{(1-2i\delta)\E q^{i}(X)}.
\]
Finally, calculating $\E\phi_{k+1,i}(X_i)g^{(i)}(X_i)$
from (\ref{giwrgis2}) (for $k=0,1,\ldots,N-i-1$)
we see that
\begin{eqnarray*}
&&\E\phi_{k,i+1}(X_{i+1})g^{(i+1)}(X_{i+1})
\\
&&\quad  = \sqrt{\frac{(k+1)(1-k\delta/(1-2i\delta))} {
\E q^{i+1}(X)/((1-2i\delta)\E q^{i}(X))}} \sqrt{\frac{(k+i+1)!}{(k+1)!}
\frac{\prod_{j=k+i}^{k+2i-1}(1-j\delta)}{\E q^i(X)}} \E\phi_{k+i+1}(X)g(X)
\\
&&\quad  = \sqrt{\frac{(k+i+1)!}{k!}\frac{\prod_{j=k+i}^{k+2i-1}(1-j\delta)} {
\E q^{i+1}(X)}} \E\phi_{k+i+1}(X)g(X),\qquad
k=0,1,\ldots,N-i-1,
\end{eqnarray*}
which verifies the inductional step and shows that
(\ref{giwrgis2}) holds for all $i\in\{1,2,\ldots,n\}$.
Letting $i=n$ in (\ref{giwrgis2}) completes the proof.
\end{pf}

\section{The strengthened inequality}\label{sec3}

In the present section, we
assume that $X\sim\IP$ with
$\delta\leq0$. The well-known
Normal, Gamma and Beta random variables and their
affine transformations are of this form -- see \cite{AP2},
Table~2.1.
In this case the orthonormal polynomial system
$\{\phi_k\}_{k=0}^{\infty}$ is complete in
$L^2(\R,X)$ and, therefore, the following
result holds.
%
\begin{lem}\label{lem.3.1}
If $X\sim\IP$ with $\delta\leq0$,
then
\begin{equation}\label{var.complete}
\Var g(X)=\sum_{k=1}^\infty
\alpha_k^2 \qquad \mbox{for any } g\in L^2(\R,X),
\end{equation}
where
\begin{equation}\label{Fourier.coefficients}
\alpha_k=\E\phi_k(X)g(X),\qquad
k=0,1,2,\ldots,
\end{equation}
are the Fourier
coefficients of $g$ with respect to the
orthonormal polynomial system
$\{\phi_k\}_{k=0}^{\infty}$. If, furthermore,
$g\in\CH^n(X)$ for some $n\in\{1,2,\ldots\}$,
then
\begin{equation}\label{Fourier.g.k}
\alpha_k=\E\phi_k(X) g(X)=
\frac{\E q^k(X) g^{(k)}(X)}{\sqrt{k!\E q^k(X)
\prod_{j=k-1}^{2k-2}(1-j\delta)}}, \qquad k=1,2,\ldots,n
\end{equation}
and
\begin{equation}\label{E q^n g^(n).complete}
\E q^{n}(X)
\bigl(g^{(n)}(X) \bigr)^2 =\sum_{k=n}^{\infty}
\frac{k!\prod_{j=k-1}^{k+n-2}(1-j\delta)}{(k-n)!} \alpha_{k}^2,
\end{equation}
with $\alpha_k$ given by (\ref{Fourier.coefficients}).
\end{lem}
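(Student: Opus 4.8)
The plan is to derive all three assertions from completeness of the relevant orthonormal polynomial systems together with Lemma \ref{lem.giwrgis}, exploiting throughout that $\delta\le0$ forces every moment of $X$ to be finite, so that the finite-segment restriction in Lemma \ref{lem.giwrgis} disappears.

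First I would dispose of (\ref{var.complete}). Since $\delta\le0$, all moments of $X$ exist and its moment generating function is finite near the origin, whence $\{\phi_k\}_{k=0}^\infty$ is complete in $L^2(\R,X)$. Therefore any $g\in L^2(\R,X)$ expands as $\sum_{k=0}^\infty\alpha_k\phi_k$ in $L^2$-norm, and Parseval gives $\E g^2(X)=\sum_{k=0}^\infty\alpha_k^2$. Because $\phi_0\equiv1$ we have $\alpha_0=\E g(X)$, and subtracting $\alpha_0^2$ yields (\ref{var.complete}); this is nothing but a restatement of (\ref{Parseval.g}).

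Next I would establish the coefficient formula (\ref{Fourier.g.k}). Fix $g\in\CH^n(X)$ and $k\in\{1,\dots,n\}$. As $\delta\le0$ gives $\E|X|^{2N}<\infty$ for every $N$, Lemma \ref{lem.giwrgis} applies with $N$ arbitrarily large. I would invoke (\ref{Fourier.g.n.2}) with derivative order $k$ and running index $0$ (i.e.\ replace the lemma's ``$n$'' by $k$ and its ``$k$'' by $0$), obtaining
\[
\E\phi_{0,k}(X_k)g^{(k)}(X_k)=\nu_0^{(k)}\,\E\phi_k(X)g(X)=\nu_0^{(k)}\alpha_k .
\]
Since the degree-zero orthonormal polynomial is $\phi_{0,k}\equiv1$ and $X_k$ has density $q^kf/\E q^k(X)$, the left side equals $\E g^{(k)}(X_k)=\E q^k(X)g^{(k)}(X)/\E q^k(X)$. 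Inserting $\nu_0^{(k)}=\sqrt{k!\prod_{j=k-1}^{2k-2}(1-j\delta)/\E q^k(X)}$ from (\ref{orthonormal.constants}) and solving for $\alpha_k$ produces (\ref{Fourier.g.k}). Finally, for (\ref{E q^n g^(n).complete}) I would write $\E q^n(X)(g^{(n)}(X))^2=\E q^n(X)\cdot\E(g^{(n)}(X_n))^2$, where $X_n$ has density $q^nf/\E q^n(X)$. By Theorem \ref{theo.star}, $X_n\sim\mathrm{IP}(\mu_n;q_n)$ with $\delta_n=\delta/(1-2n\delta)\le0$, so $X_n$ is again of type $1$--$3$, its moment generating function is finite near the origin, and $\{\phi_{k,n}\}_{k=0}^\infty$ is complete in $L^2(\R,X_n)$. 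Hypothesis H$_2$ says exactly that $g^{(n)}\in L^2(\R,X_n)$, so Parseval in $L^2(\R,X_n)$ gives $\E(g^{(n)}(X_n))^2=\sum_{k=0}^\infty\beta_k^2$ with $\beta_k=\E\phi_{k,n}(X_n)g^{(n)}(X_n)$. Invoking Lemma \ref{lem.giwrgis} once more (with $N\to\infty$), identity (\ref{Fourier.g.n.2}) gives $\beta_k=\nu_k^{(n)}\alpha_{k+n}$ for all $k\ge0$; substituting $(\nu_k^{(n)})^2=\tfrac{(k+n)!}{k!}\prod_{j=k+n-1}^{k+2n-2}(1-j\delta)/\E q^n(X)$, multiplying by $\E q^n(X)$, and reindexing so the sum runs from $k=n$ yields (\ref{E q^n g^(n).complete}).

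The one genuine obstacle is the last step: I must verify that $X_n$ remains inside the completeness-admissible family, which is precisely where $\delta_n\le0$ (hence finiteness of the moment generating function of $X_n$) is needed, and that \emph{every} Fourier coefficient $\beta_k$ of $g^{(n)}$ is captured by Lemma \ref{lem.giwrgis}. The latter is legitimate here only because $\delta\le0$ lifts the moment constraint $k\le N-n$, so (\ref{Fourier.g.n.2}) holds for all $k\ge0$ at once; when $\delta>0$ this is exactly the limitation that forces the weaker finite-segment conclusion of Lemma \ref{lem.giwrgis}.
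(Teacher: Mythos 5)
Your proposal is correct, and for (\ref{var.complete}) and (\ref{E q^n g^(n).complete}) it is essentially the paper's own argument: completeness of $\{\phi_k\}_{k=0}^\infty$ under $\delta\le 0$ plus Parseval; then Theorem \ref{theo.star} (giving $\delta_n=\delta/(1-2n\delta)\le 0$), completeness of $\{\phi_{k,n}\}_{k=0}^\infty$ in $L^2(\R,X_n)$, Parseval applied to $g^{(n)}$ (legitimate since H$_2$ is exactly $g^{(n)}\in L^2(\R,X_n)$), and the coefficient transfer $\beta_k=\nu_k^{(n)}\alpha_{k+n}$ from Lemma \ref{lem.giwrgis}, followed by reindexing. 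Where you genuinely diverge is (\ref{Fourier.g.k}): the paper obtains it from the covariance identity $\E P_k(X)g(X)=\E q^k(X)g^{(k)}(X)$ of Theorem \ref{theo.cov}, after invoking Corollary \ref{cor.Eq^n g^(n)} and Cauchy--Schwarz to check $\E q^k(X)|g^{(k)}(X)|<\infty$; you instead specialize Lemma \ref{lem.giwrgis} to derivative order $k$ and running index $0$, exploiting $\phi_{0,k}\equiv 1$, which is a clean and arguably more economical route since Lemma \ref{lem.giwrgis} is needed anyway for the third identity. Note, however, that your route is not free of the corollary either: Lemma \ref{lem.giwrgis} applied at derivative order $k$ has hypothesis $g\in\CH^k(X)$, while you only assume $g\in\CH^n(X)$; the inclusion $\CH^n(X)\subseteq\CH^k(X)$ for $k\le n$ is precisely what Corollary \ref{cor.Eq^n g^(n)} supplies (valid here because $\delta\le 0$ gives all moments), so you should cite it at that point. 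With that one citation added, your argument is complete and matches the paper's in structure and strength.
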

\begin{pf}
(\ref{var.complete}) is the
well-known Parseval's identity.
Also, if $g\in\CH^n(X)$ then,
by Corollary~\ref{cor.Eq^n g^(n)},
$g\in\CH^k(X)$ for all
$k \in\{0,1,\ldots,n\}$.
Therefore, the Cauchy--Schwarz inequality
shows that $\E q^k(X)|g^{(k)}(X)|\leq
\E q^k(X) \E q^{k}(X)(g^{(k)}(X))^2<\infty$. Hence,
(\ref{Fourier.g.k}) follows from (\ref{cov.id}) --
see Theorem~\ref{theo.cov} --
and the fact that the
polynomials
$P_k(x):=(-1)^k D^k[q^k(x)f(x)]/f(x)$
are related to
$\phi_k$ by
$P_k(x)=\phi_k(x)\sqrt{k!\E q^k(X)
\prod_{j=k-1}^{2k-2}(1-j\delta)}$ for all
$k\in\{1,2,\ldots\}$.
Moreover, by Lemma~\ref{lem.giwrgis} we have that
for any
$g\in\CH^n(X)$, the Fourier coefficients
$\alpha_k=\E\phi_k(X)g(X)$
(of $g$ with respect to $X$) and the Fourier coefficients
$\alpha_k^{(n)}:=\E\phi_{k,n}(X_n) g^{(n)}(X_n)$ of $g^{(n)}$
with respect to $X_n$ are related through
\[
\alpha_k^{(n)}= \sqrt{
\frac{(k+n)!}{k!} \frac{\prod_{j=k+n-1}^{k+2n-2}(1-j\delta)}{\E q^n(X)}} \alpha_{k+n},\qquad  k=0,1,2,\ldots,
\]
where $\E q^n(X)$ is given explicitly by (\ref{Eq^k}).
Finally,
Theorem~\ref{theo.star} asserts that
\[
X_n\sim\mbox{IP}(\mu_n;\delta_n,
\beta_n,\gamma_n) \qquad \mbox{with } \delta_n=
\frac{\delta}{1-2n\delta}\leq0.
\]
Hence, $\delta_n\leq0$
guarantees that
the corresponding orthonormal polynomial system
$\{\phi_{k,n}\}_{k=0}^{\infty}$ is complete in
$L^2(\R,X_n)$. Since $g\in\CH^n(X)$,
$g^{(n)}\in L^2(\R,X_n)$ and, by Parseval's identity,
\[
\E\bigl(g^{(n)}(X_n)\bigr)^2=\sum
_{k=0}^{\infty} \bigl(\alpha_k^{(n)}
\bigr)^2 =\frac{1}{\E q^n(X)}\sum_{k=0}^{\infty}
\frac{(k+n)!
\prod_{j=k+n-1}^{k+2n-2}(1-j\delta)}{k!} \alpha_{k+n}^2
\]
(thus, the series converges).
Observing that
\[
\E\bigl(g^{(n)}(X_n)\bigr)^2=\frac{1}{\E
q^n(X)}
\E q^n(X) \bigl(g^{(n)}(X)\bigr)^2,
\]
(\ref{E q^n g^(n).complete})
is deduced and the proof is complete.
\end{pf}

We are now in a position to state and prove the main result of the paper.
%
\begin{theo}\label{theo.main}
If $X\sim\IP$ with $\delta\leq0$
and if $g\in\CH^n(X)$ for some $n\in\{1,2,\ldots\}$
then
%
\begin{eqnarray}\label{main}
\Var g(X)&\leq& \sum_{k=1}^{n}
\frac{\E^2 q^k(X) g^{(k)}(X)} {
k!\E q^k(X)\prod_{j=k-1}^{2k-2}(1-j\delta)}\nonumber
\\[-8pt]\\[-8pt]
&&{} +\frac{\E q^n(X)(g^{(n)}(X))^2
-(1/\E q^n(X))\E^2 q^n(X) g^{(n)}(X)}{(n+1)!\prod_{j=n}^{2n-1}(1-j\delta)},\nonumber
%
\end{eqnarray}
with equality if and only if $g$ is a polynomial of
degree at most $n+1$.

In particular, if $\sigma^2=\Var X$ and $g$
is absolutely continuous with a.s. derivative $g'$
such that $\E q(X)(g'(X))^2<\infty$
(i.e., $g\in\CH^1(X)$) then
\begin{equation}
\label{main2} \Var g(X)\leq \biggl(1-\frac{1}{2(1-\delta)} \biggr)
\frac{1}{\sigma^2} \E^2 q(X) g'(X) +\frac{1}{2(1-\delta)}
\E q(X) \bigl(g'(X)\bigr)^2,
\end{equation}
with equality if and only if $g$ is a polynomial of degree at most
two.
\end{theo}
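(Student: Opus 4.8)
The plan is to reduce everything to Parseval's identity from Lemma~\ref{lem.3.1} and then to a term-by-term comparison of two convergent series. Since $X\sim\IP$ with $\delta\leq 0$, the system $\{\phi_k\}$ is complete, so $\Var g(X)=\sum_{k=1}^{\infty}\alpha_k^2$ by (\ref{var.complete}). First I would observe that (\ref{Fourier.g.k}) identifies the finite sum on the right-hand side of (\ref{main}) with the initial segment $\sum_{k=1}^{n}\alpha_k^2$, because $\alpha_k^2=\E^2 q^k(X)g^{(k)}(X)/(k!\,\E q^k(X)\prod_{j=k-1}^{2k-2}(1-j\delta))$ for $k=1,\dots,n$. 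Hence it remains only to bound the tail $\sum_{k=n+1}^{\infty}\alpha_k^2$ by the second fraction in (\ref{main}).

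Next I would rewrite that second fraction using (\ref{E q^n g^(n).complete}). Taking $k=n$ in (\ref{Fourier.g.k}) gives $\E^2 q^n(X)g^{(n)}(X)/\E q^n(X)=n!\prod_{j=n-1}^{2n-2}(1-j\delta)\,\alpha_n^2$, which is exactly the $k=n$ term of the series in (\ref{E q^n g^(n).complete}). Consequently the numerator $\E q^n(X)(g^{(n)}(X))^2-\frac{1}{\E q^n(X)}\E^2 q^n(X)g^{(n)}(X)$ equals the tail $\sum_{k=n+1}^{\infty}\frac{k!\prod_{j=k-1}^{k+n-2}(1-j\delta)}{(k-n)!}\alpha_k^2$. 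Dividing by $(n+1)!\prod_{j=n}^{2n-1}(1-j\delta)$, inequality (\ref{main}) becomes equivalent to
\[
\sum_{k=n+1}^{\infty}\alpha_k^2\;\leq\;\sum_{k=n+1}^{\infty} c_{k,n}\,\alpha_k^2,
\qquad
c_{k,n}:=\frac{k!\prod_{j=k-1}^{k+n-2}(1-j\delta)}{(k-n)!\,(n+1)!\prod_{j=n}^{2n-1}(1-j\delta)} .
\]
Both series converge (this is guaranteed by Lemma~\ref{lem.3.1}) and every $\alpha_k^2\geq 0$, so it suffices to prove $c_{k,n}\geq 1$ for all $k\geq n+1$.

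The heart of the argument---and the step I expect to be the main obstacle---is this coefficient comparison. A direct substitution gives $c_{n+1,n}=1$. For the monotonicity I would compute the ratio $c_{k+1,n}/c_{k,n}=\frac{k+1}{k+1-n}\cdot\frac{1-(k+n-1)\delta}{1-(k-1)\delta}$. For $k\geq n+1$ the first factor exceeds $1$ since $n\geq 1$, while the hypothesis $\delta\leq 0$ forces $1-(k+n-1)\delta\geq 1-(k-1)\delta>0$, so the second factor is $\geq 1$; hence $c_{k,n}$ is strictly increasing in $k$ on $\{n+1,n+2,\dots\}$. Combined with $c_{n+1,n}=1$ this yields $c_{k,n}\geq 1$, with strict inequality exactly when $k\geq n+2$, proving (\ref{main}). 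Moreover the difference of the two tails equals $\sum_{k\geq n+2}(c_{k,n}-1)\alpha_k^2$ with strictly positive weights, so equality holds if and only if $\alpha_k=0$ for every $k\geq n+2$; by completeness this is precisely the statement that $g$ coincides with its terminating Fourier expansion $\sum_{k=0}^{n+1}\alpha_k\phi_k$, i.e.\ that $g$ is a polynomial of degree at most $n+1$.

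Finally, (\ref{main2}) is just the case $n=1$ of (\ref{main}). Here the single term of the first sum is $\E^2 q(X)g'(X)/\E q(X)$ and the denominator of the second term is $2(1-\delta)$. Using the covariance identity (\ref{cov.id}) (Theorem~\ref{theo.cov}) with the identity function, $\sigma^2=\Var X=\Cov(X,X)=\E q(X)$; substituting $\E q(X)=\sigma^2$ and collecting the coefficient of $\frac{1}{\sigma^2}\E^2 q(X)g'(X)$ produces exactly (\ref{main2}), the equality case ``$g$ a polynomial of degree at most two'' being the $n=1$ instance of the characterization above.
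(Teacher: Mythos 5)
Your proposal is correct and follows essentially the same route as the paper's own proof: Parseval's identity plus the Fourier-coefficient formulas of Lemma \ref{lem.3.1}, reduction of (\ref{main}) to a term-by-term comparison of the tail series, and the observation that the weights (your $c_{k,n}$, the paper's $\lambda_k$) equal $1$ at $k=n+1$ and exceed $1$ for $k\geq n+2$, which also yields the equality characterization. The only cosmetic difference is that you establish $c_{k,n}>1$ via the successive ratio $c_{k+1,n}/c_{k,n}$, whereas the paper bounds $\lambda_k\geq\lambda_{n+2}>1$ by separate monotonicity of $\binom{k}{n}$ and of the product $\prod_{j=k-1}^{k+n-2}(1-j\delta)$.
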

Three examples of (\ref{main2}) are as follows:
%
\begin{exam}\label{ex.normal}
If $X\sim N(\mu,\sigma^2)\equiv\operatorname{IP}(\mu;0,0,\sigma^2)$
then $\delta=0$,
$q(x)\equiv\sigma^2$ and we
obtain the inequality
%
\begin{equation}\label{normal.bound}
 \Var g(X)\leq\tfrac{1}{2} \sigma^2
\E^2 g'(X) +\tfrac{1}{2} \sigma^2\E
\bigl(g'(X)\bigr)^2, %
\end{equation}
in which the equality holds if and only if $g$ is a polynomial
of degree at most two.
Chernoff's upper bound, $\Var g(X)\leq\sigma^2\E(g'(X))^2$,
is strictly weaker than (\ref{normal.bound})
since, obviously, $\E^2 g'(X)\leq\E(g'(X))^2$,
and the equality holds if and only if $g$ is linear.
It should be noted that $\sigma^2\E^2 g'(X)$
is, actually, a lower bound for $\Var g(X)$;
see, for example, \cite{Cac}.
\end{exam}
%
\begin{exam}\label{ex.Gamma}
If $X\sim\Gamma(a,\lambda)\equiv\operatorname{IP}(a/\lambda
;0,1/\lambda,0)$
so that $f(x)=\lambda^a x^{a-1}\mathrm{e}^{-\lambda x}/\allowbreak \Gamma(a)$, $x>0$, then
$\delta=0$, $q(x)=x/\lambda$, $\sigma^2=a/\lambda^2$
and we
obtain the inequality
%
\begin{equation}
\label{Gamma.bound} \Var g(X)\leq\frac{1}{2a} \E^2 X
g'(X) +\frac{1}{2\lambda} \E X\bigl(g'(X)
\bigr)^2, %
\end{equation}
in which the equality holds if and only if $g$ is a polynomial
of degree at most two.
\end{exam}
%
\begin{exam}
\label{ex.Beta}
If $X\sim B(a,b)\equiv\mbox{IP}(\frac{a}{a+b};\frac{-1}{a+b},\frac
{1}{a+b},0)$
then $\delta=\frac{-1}{a+b}$, $q(x)=\frac{x(1-x)}{a+b}$,
$\sigma^2=\frac{ab}{(a+b)^2 (a+b+1)}$
and we
obtain the inequality
%
\begin{equation}\label{Beta.bound}
\Var g(X)\leq\frac{a+b+2}{2ab} \E^2 X(1-X)
g'(X) +\frac{1}{2(a+b+1)} \E X(1-X) \bigl(g'(X)
\bigr)^2, %
\end{equation}
in which the equality holds if and only if $g$ is a polynomial
of degree at most two. In the particular case
where $a=b=1$, $X=U$ is uniformly distributed over the interval
$(0,1)$ and (\ref{Beta.bound}) yields an improvement
of Polya's inequality (see, e.g., \cite{AB}),
\[
\int_0^1 g^2(x) \,\ud{x}- \biggl(
\int_0^1 g(x) \,\ud{x} \biggr)^2 \leq
\frac{1}{2}\int_0^1 x(1-x)
\bigl(g'(x)\bigr)^2 \,\ud{x}.
\]
Indeed, for $a=b=1$, (\ref{Beta.bound}) yields
\[
\int_0^1 g^2(x) \,\ud{x}- \biggl(
\int_0^1 g(x) \,\ud{x} \biggr)^2 \leq2
\biggl(\int_0^1 x(1-x)g'(x)\,
\ud{x} \biggr)^2 +\frac{1}{6} \int_0^1
x(1-x) \bigl(g'(x)\bigr)^2 \,\ud{x},
\]
and the upper bound is smaller than Polya's bound
because, by the Cauchy--Schwarz inequality,
\begin{eqnarray*}
\biggl(\int_0^1 x(1-x)g'(x)
\,\ud{x} \biggr)^2&\leq& \int_0^1 x(1-x)
\,\ud{x} \int_0^1 x(1-x) \bigl(g'(x)
\bigr)^2 \,\ud{x}\\
&=&\frac16\int_0^1
x(1-x) \bigl(g'(x)\bigr)^2 \,\ud{x}.
\end{eqnarray*}
\end{exam}
%
\begin{rem}\label{rem.Cacoullos}
In \cite{CP1,John,PP} it was shown
that $\Var g(X)\leq\E q(X)(g'(X))^2$; the equality
in this Chernoff-type variance bound is attained only
by linear functions $g$. Also, in \cite{Cac,John,PP,CP2}
it was shown that
$\Var g(X)\geq\frac{1}{\sigma^2}\E^2 q(X)g'(X)$,
in which the equality characterizes again the
linear functions.
We observe that the upper bound in (\ref{main2})
is a convex combination of the preceding
lower and upper bounds and, thus, smaller than
the Chernoff-type upper bound, $\E q(X)(g'(X))^2$.
Also, the last term in the upper bound
(\ref{main})
can be rewritten as
\[
\frac{\E q^n(X)(g^{(n)}(X))^2-(1/\E q^n(X))\E^2 q^n(X)
g^{(n)}(X)} {
(n+1)!\prod_{j=n}^{2n-1}(1-j\delta)} =\frac{\E q^n(X)}{(n+1)!\prod_{j=n}^{2n-1}(1-j\delta)} \Var g^{(n)}(X_n).
\]
Thus, we can apply the Chernoff-type upper bound
to $\Var g^{(n)}(X_n)$, provided that $g^{(n)}\in\CH^1(X_n)$.
Recall that $g^{(n)}\in\CH^1(X_n)$ means that $g^{(n)}$
is absolutely continuous with a.s. derivative
$g^{(n+1)}$ such that $\E q_n(X_n)(g^{(n+1)}(X_n))^2<\infty$.
Since $X_n\sim f_n=q^n f/\E q^n(X)$, $\delta\leq0$
and $q_n(x)=q(x)/(1-2n\delta)$, the preceding requirement
is equivalent to
\[
\frac{1}{(1-2n\delta)\E q^n(X)} 
\E q^{n+1}(X) \bigl(g^{(n+1)}(X)
\bigr)^2<\infty;
\]
thus,
$g^{(n)}\in\CH^1(X_n)$ if and only if $g\in\CH^{n+1}(X)$.
Therefore, if $g\in\CH^{n+1}(X)$ then we have
\[
\Var g^{(n)}(X_n)\leq\E q_n(X_n)
\bigl(g^{(n+1)}(X_n)\bigr)^2 
=
\frac{\E q^{n+1}(X)(g^{(n+1)}(X))^2}{(1-2n\delta)
\E q^{n}(X)},
\]
with equality if and only if $g^{(n)}$ is linear, that is,
$g$ is a polynomial of degree at most $n+1$. The
preceding inequality shows that for any $g\in\CH^{n+1}(X)$,
\[
\frac{\E q^n(X)(g^{(n)}(X))^2-(1/\E q^n(X))\E^2 q^n(X)
g^{(n)}(X)} {
(n+1)!\prod_{j=n}^{2n-1}(1-j\delta)}\leq \frac{\E q^{n+1}(X)(g^{(n+1)}(X))^2}{(n+1)!\prod_{j=n}^{2n}(1-j\delta)}, %
\]
with equality only for polynomial $g$ of degree at most
$n+1$. Combining the upper bound in (\ref{main})
with the last displayed inequality, we
obtain the weaker bound
%
\begin{equation}\label{Chernoff.n}
\Var g(X)\leq\sum_{k=1}^{n-1}
\frac{\E^2 q^k(X) g^{(k)}(X)} {
k!\E q^k(X)\prod_{j=k-1}^{2k-2}(1-j\delta)} +\frac{\E q^{n}(X)
(g^{(n)}(X))^2}{n!\prod_{j=n-1}^{2n-2}(1-j\delta)},
\end{equation}
which holds for any $g\in\CH^n(X)$, and the equality is attained
if and only if $g$ is a polynomial of degree at most $n$.
For $n=1$ this is the Chernoff-type variance bound.
Also, for $X\sim B(a,b)$,
(\ref{Chernoff.n}) has been shown by Wei and Zhang \cite{WZ},
using Jacobi polynomials.
\end{rem}
\begin{pf*}{Proof of Theorem~\ref{theo.main}}
From (\ref{var.complete}) and (\ref{Fourier.g.k}),
\begin{equation}\label{ouf}
\Var g(X)-\sum_{k=1}^{n}
\frac{\E^2 q^k(X) g^{(k)}(X)}{k!\E
q^k(X)\prod_{j=k-1}^{2k-2}(1-j\delta)} = \alpha_{n+1}^2+\alpha_{n+2}^2+
\cdots,
\end{equation}
with $\alpha_k$ given by
(\ref{Fourier.coefficients}).
Also, from (\ref{Fourier.g.k}) with $k=n$,
\[
\frac{1}{\E q^n(X)}\E^2 q^n(X)
g^{(n)}(X) = n! \Biggl(\prod_{j=n-1}^{2n-2}(1-j
\delta) \Biggr) \alpha_n^2.
\]
Thus, in view of (\ref{E q^n g^(n).complete}),
\begin{eqnarray*}
&&\E q^n(X) \bigl(g^{(n)}(X)\bigr)^2-
\frac{1}{\E q^n(X)}\E^2 q^n(X) g^{(n)}(X)
\\
&&\quad =\sum_{k=n}^{\infty} \frac{k!\prod_{j=k-1}^{k+n-2}(1-j\delta)}{(k-n)!}
\alpha_{k}^2-n! \Biggl(\prod_{j=n-1}^{2n-2}(1-j
\delta) \Biggr) \alpha_n^2 = \sum
_{k=n+1}^{\infty} \frac{k!\prod_{j=k-1}^{k+n-2}(1-j\delta)}{(k-n)!}\alpha_{k}^2.
\end{eqnarray*}
Therefore,
\begin{eqnarray*}
& &\frac{\E q^n(X)(g^{(n)}(X))^2-(1/\E q^n(X))\E^2 q^n(X)
g^{(n)}(X)}{(n+1)!
\prod_{j=n}^{2n-1}(1-j\delta)}
\\
& &\quad =\sum_{k=n+1}^{\infty}\frac{k!\prod_{j=k-1}^{k+n-2}(1-j\delta)} {
(k-n)!(n+1)!\prod_{j=n}^{2n-1}(1-j\delta)}
\alpha_{k}^2 =\alpha_{n+1}^2+\sum
_{k=n+2}^\infty\lambda_k
\alpha_k^2,
\end{eqnarray*}
where
\[
\lambda_k:=\frac{1}{n+1}{k\choose n} \frac{\prod_{j=k-1}^{k+n-2}(1-j\delta)}{\prod_{j=n}^{2n-1}(1-j\delta)},
\qquad k=n+2,n+3,\ldots .
\]
The sequence $\{\lambda_k\}_{k=n+2}^{\infty}$
is nondecreasing in $k$. Indeed, since $\delta\leq0$, we have
\[
1\leq1-\delta\leq1-2\delta\leq1-3\delta\leq\cdots
\]
and thus,
$k\mapsto\prod_{j=k-1}^{k+n-2}(1-j\delta)$ is nondecreasing in $k$
and positive
(for each $k$ the product contains $n$ positive factors). Also,
\[
k\mapsto{k\choose n}
\]
is, obviously, positive and nondecreasing in $k$.
Thus, for every $k\geq n+2$,
\[
\lambda_k\geq \lambda_{n+2}= \biggl(1+\frac{n}{2}
\biggr) \biggl(1-\frac{n\delta}{1-n\delta} \biggr)>1,
\]
because $1+n/2>1$ and $1-n\delta/(1-n\delta)\geq1$
(since $\delta\leq0$). It follows that
\begin{equation}\label{ouf2}
\frac{\E q^n(X)(g^{(n)}(X))^2-(1/\E q^n(X))
\E^2 q^n(X) g^{(n)}(X)} {
(n+1)!\prod_{j=n}^{2n-1}(1-j\delta)}\geq \alpha_{n+1}^2+
\alpha_{n+2}^2+\cdots,
\end{equation}
with equality if and only if $\alpha_{n+2}=\alpha_{n+3}=\cdots=0$,
that is,
if and only if $g$ is a polynomial of degree at most $n+1$.
A combination of (\ref{ouf}) and (\ref{ouf2})
completes the proof.
\end{pf*}
%
\begin{rem}
\label{rem.question2}
The upper bound in (\ref{main})
is meaningful (it is nonnegative and makes sense)
even for
$0<\delta<\frac{1}{2n-1}$, in which case
$\E|X|^{2n}<\infty$. Also,
since
$x^{n+1}\in L^2(\R,X)$ if and only if $\delta<\frac{1}{2n+1}$,
it would be desirable to show the validity of (\ref{main})
at least when $0<\delta<\frac{1}{2n+1}$.
For example, we have tried, without success,
to prove (\ref{main2}) when
$0<\delta<\frac{1}{3}$. In contrast to the corresponding
Chernoff-type bound, which can be shown directly
(without Fourier expansions -- see, e.g.,
\cite{Chen}; cf. Lemma~\ref{lem.deriv.l2}, above),
it seems that the completeness of the corresponding
orthonormal polynomial
system in $L^2(\R,X)$ plays a crucial role
in proving (\ref{main2}).
\end{rem}

\begin{appendix}
\section*{Appendix} \label{app}
%
\begin{prop}[(\cite{AP2}, Proposition~2.1)]
\label{prop.a1}
Let $X\sim\IPq$ and set $(\alpha,\omega):=(\essinf(X),\break  \esssup(X))$.
Then, there is a version $f$ of the density of $X$ such
that
\begin{enumerate}[(iii)]
\item[(i)]
$f(x)$ is strictly positive for $x$ in $(\alpha,\omega)$ and zero
otherwise, that is,
$\{x\dvtx f(x)>0\}=(\alpha,\omega)$;
\item[(ii)]
$f\in C^{\infty}(\alpha,\omega)$, that is, $f$ has derivatives
of any order in $(\alpha,\omega)$;
\item[(iii)]
$X$ is a (usual) Pearson random variable supported in $(\alpha,\omega
)$, that is,
$f'(x)/f(x)=p_1(x)/q(x)$, $x\in(\alpha,\omega)$, where $p_1(x)=\mu-x-q'(x)$
is a polynomial of degree at most one;
\item[(iv)] $q(x)=\delta x^2+\beta x+\gamma>0$ for
all $x\in(\alpha,\omega)$;
\item[(v)]
if $\alpha>-\infty$ then $q(\alpha)=0$ and, similarly, if $\omega
<+\infty$
then $q(\omega)=0$;
\item[(vi)]
for any $\theta,c\in\R$ with $\theta\neq0$, the random
variable $\widetilde{X}:=\theta X+c\sim\IPtilde$
with $\widetilde{\mu}=\theta\mu+c$ and
$\widetilde{q}(x)=\theta^2 q((x-c)/\theta)$.
\end{enumerate}
\end{prop}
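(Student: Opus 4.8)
The plan is to work throughout with the signed indefinite integral $F(x):=\int_{-\infty}^{x}(\mu-t)f(t)\ud t$, which by Definition~\ref{def.IP} equals $q(x)f(x)$ for every $x\in\R$. Since $\mu$ is finite the integrand is integrable, so $F$ is absolutely continuous with $F'(x)=(\mu-x)f(x)$ for a.e.\ $x$, and $F(-\infty)=0$, $F(+\infty)=\int(\mu-t)f(t)\ud t=\mu-\mu=0$. My first substantive step is to prove that $F>0$ on the open support $(\alpha,\omega)$. Because the law is non-degenerate, its mean lies strictly inside the support, $\mu\in(\alpha,\omega)$; on $(\alpha,\mu)$ one has $\mu-x>0$ and $f>0$ a.e., so $F$ increases there, while on $(\mu,\omega)$ it decreases, and $F$ vanishes at each endpoint (using $f\equiv0$ outside the support when the endpoint is finite, and the limits above when it is infinite). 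Thus $F$ rises from $0$ to a positive maximum at $\mu$ and returns to $0$, so $F>0$ on all of $(\alpha,\omega)$.

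Parts (iv) and (i) will then be read off from $qf=F$ together with $f\ge0$: wherever $f>0$ we get $q=F/f>0$, so $q\ge0$ on $(\alpha,\omega)$ by continuity, and $q$ cannot vanish at an interior point $x_0$, for that would give $F(x_0)=q(x_0)f(x_0)=0$ against $F>0$; hence $q>0$ on $(\alpha,\omega)$, which is (iv). Choosing the version $f:=F/q$ on $(\alpha,\omega)$ and $f:=0$ elsewhere produces a density strictly positive exactly on $(\alpha,\omega)$, which is (i). For (ii) I would bootstrap regularity: $f=F/q$ is continuous because $F$ is continuous and $q$ is continuous and non-vanishing on $(\alpha,\omega)$; then $F'=(\mu-x)f$ is continuous, so $F\in C^1$, so $f=F/q\in C^1$, and the implication $f\in C^k\Rightarrow F\in C^{k+1}\Rightarrow f\in C^{k+1}$ iterates to $f\in C^{\infty}(\alpha,\omega)$. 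Differentiating $qf=F$ then gives $q'f+qf'=(\mu-x)f$, i.e.\ $qf'=(\mu-x-q')f$, which is (iii) with $p_1(x)=\mu-x-q'(x)=(\mu-\beta)-(1+2\delta)x$ of degree at most one.

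For the endpoint condition (v) I would argue by contradiction. Suppose $\alpha>-\infty$ and $q(\alpha)\neq0$; since $q>0$ on $(\alpha,\omega)$ and $q$ is continuous, $q(\alpha)>0$, so $q>0$ on some $[\alpha,\alpha+\varepsilon)$. On one hand $F(\alpha)=0$ (the integral over $(-\infty,\alpha)$ vanishes), so $f(\alpha^{+})=F(\alpha)/q(\alpha)=0$. On the other hand the Pearson equation (iii) reads $f'=(p_1/q)f$ with $p_1/q$ bounded near $\alpha$, so fixing $x_1\in(\alpha,\alpha+\varepsilon)$ gives $f(x)=f(x_1)\exp\!\big(\int_{x_1}^{x}(p_1/q)\ud t\big)$ and hence $f(\alpha^{+})=f(x_1)\exp\!\big(\int_{x_1}^{\alpha}(p_1/q)\ud t\big)>0$, the exponent being finite and $f(x_1)>0$. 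This contradiction forces $q(\alpha)=0$, and the identical argument at a finite right endpoint gives $q(\omega)=0$. Part (vi) is a change of variables: for $\widetilde X=\theta X+c$ the density is $\widetilde f(y)=|\theta|^{-1}f((y-c)/\theta)$ and $\E\widetilde X=\theta\mu+c=\widetilde\mu$, and substituting $s=\theta t+c$ (so $\widetilde\mu-s=\theta(\mu-t)$) in $\int_{-\infty}^{y}(\widetilde\mu-s)\widetilde f(s)\ud s$ reduces it, after accounting for the sign of $\theta$, to $\theta\,q((y-c)/\theta)f((y-c)/\theta)=\widetilde q(y)\widetilde f(y)$ with $\widetilde q(x)=\theta^2q((x-c)/\theta)$, which is exactly the defining relation for $\IPtilde$.

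The hard part will be the boundary statement (v): the remaining items are algebraic consequences of the identity $qf=F$ plus a routine regularity bootstrap, whereas (v) needs the extra input that a nonzero solution of the linear equation $f'=(p_1/q)f$ cannot decay to $0$ at a finite endpoint at which the coefficient $p_1/q$ stays bounded. A secondary point requiring care is the qualitative claim that $F>0$ on the entire open support, since it is this single fact that simultaneously yields positivity of $f$ in (i) and of $q$ in (iv); the key observation is simply that $F$ is unimodal with peak at $\mu$ and vanishes at both endpoints.
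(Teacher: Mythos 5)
This proposition is stated in the paper's Appendix \emph{without proof} --- it is imported verbatim from \cite{AP2}, Proposition 2.1 --- so there is no in-paper argument to compare against; your plan is in fact the natural route and structurally the one followed in the cited source: work with $F(x)=\int_{-\infty}^x(\mu-t)f(t)\,\ud{t}=q(x)f(x)$, establish $F>0$ on $(\alpha,\omega)$, read off (i) and (iv), bootstrap $f=F/q$ to $C^{\infty}$ for (ii), differentiate $qf=F$ for (iii), use the ODE representation of $f$ for (v), and change variables for (vi). Items (ii)--(vi) are handled correctly; in particular your endpoint contradiction (if $\alpha>-\infty$ and $q(\alpha)>0$, then continuity gives $f(\alpha^{+})=F(\alpha)/q(\alpha)=0$, while $f(x)=f(x_1)\exp\big(\int_{x_1}^{x}(p_1/q)\,\ud{t}\big)$ with $p_1/q$ bounded near $\alpha$ forces $f(\alpha^{+})>0$) is exactly right. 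The only blemish in (vi) is cosmetic: for $\theta<0$ the substitution produces $|\theta|\,q(x)f(x)$, not $\theta\,q(x)f(x)$, which still matches $\widetilde{q}(y)\widetilde{f}(y)=\theta^2 q(x)\,|\theta|^{-1}f(x)$.

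The one genuine gap is at the very first step: you justify $F>0$ on $(\alpha,\mu)$ by asserting that $f>0$ a.e.\ there, so that $F$ strictly increases. But $(\alpha,\omega)=(\essinf(X),\esssup(X))$ does not by itself prevent $f$ from vanishing on an interior set of positive measure; a.e.\ positivity of $f$ on the whole interval is precisely the a.e.\ content of conclusion (i), so as written the argument is circular. The fix is short and stays inside your framework: $F$ is nondecreasing on $(-\infty,\mu]$, nonincreasing on $[\mu,+\infty)$, and $F(\pm\infty)=0$, hence $F\geq 0$ everywhere; if $F(x_0)=0$ for some $x_0\in(\alpha,\mu)$, monotonicity forces $F\equiv 0$ on $(-\infty,x_0]$, and since $\mu-t>0$ there this gives $f=0$ a.e.\ on $(-\infty,x_0]$, i.e.\ $\essinf(X)\geq x_0>\alpha$, a contradiction; argue symmetrically on $(\mu,\omega)$. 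With $F>0$ in hand you also do not need the continuity/density-of-$\{f>0\}$ detour for (iv): since the defining identity (\ref{qua1}) holds for \emph{every} $x$, the relation $q(x)f(x)=F(x)>0$ at each interior point directly yields $f(x)>0$ and $q(x)>0$ simultaneously. With that patch your proof is complete.
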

\setcounter{lem}{0}
\begin{lem}[(\cite{AP2}, Corollary~2.2)]\label{lem.moments}
Assume that $X\sim\IP$.
\begin{enumerate}[(ii)]
\item[(i)] If $\delta\leq0,$ then $\E|X|^\theta<\infty$
for any $\theta\in[0,\infty)$.
\item[(ii)] If $\delta>0,$ then $\E|X|^\theta<\infty$ for any
$\theta\in[0,1+1/\delta)$,
while $\E|X|^{1+1/\delta}=\infty$.
\end{enumerate}
\end{lem}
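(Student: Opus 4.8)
The plan is to read the moment behaviour directly off the tail of the density, which is controlled entirely by the leading coefficient $\delta$ of $q$. By Proposition \ref{prop.a1}(iii) the density satisfies the Pearson ODE
\[
\frac{f'(x)}{f(x)}=\frac{\mu-x-q'(x)}{q(x)}=\frac{-(1+2\delta)x+(\mu-\beta)}{\delta x^2+\beta x+\gamma},\qquad x\in(\alpha,\omega),
\]
while Proposition \ref{prop.a1}(iv)--(v) fixes the support: $q>0$ throughout $(\alpha,\omega)$, and any finite endpoint is a root of $q$. First I would dispose of part (i). If $\delta<0$ then $q$ is a downward parabola, so $\{q>0\}$ is a bounded interval; since the support is contained in $\{q>0\}$, $X$ is a.s.\ bounded and every moment is finite. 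If $\delta=0$ then $q$ is affine: either $q$ is a positive constant (the Gaussian case) or $q$ is genuinely linear and positive on a half-line, and the ODE then gives $f'(x)/f(x)\to-1/\beta$ at the unbounded end, hence an exponentially decaying tail. In both subcases $\E|X|^\theta<\infty$ for every $\theta\in[0,\infty)$.

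The substance is part (ii), with $\delta>0$. Here $q\to+\infty$ at $\pm\infty$, so $\{q>0\}$ has no bounded component; by Proposition \ref{prop.a1}(v) the support is therefore either a half-line or all of $\R$, and in each case at least one endpoint is infinite. On an unbounded tail, say $\omega=+\infty$, I would integrate the ODE. Writing its right-hand side as $-\frac{1+2\delta}{\delta}\cdot\frac1x$ plus a remainder that is $O(1/x^2)$ (hence integrable at infinity), integration yields
\[
\ln f(x)=-\frac{1+2\delta}{\delta}\ln x+c+o(1),\qquad x\to\infty,
\]
so that $f(x)\sim C_\infty\,x^{-(2+1/\delta)}$ with a finite and strictly positive constant $C_\infty=e^{c}$. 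Consequently $\int^{\infty}x^{\theta}f(x)\,dx$ behaves like $\int^{\infty}x^{\theta-2-1/\delta}\,dx$, which converges exactly when $\theta<1+1/\delta$ and diverges (logarithmically) at $\theta=1+1/\delta$. When the support is all of $\R$, the identical computation as $x\to-\infty$ (or the reflection $X\mapsto-X$, which by Proposition \ref{prop.a1}(vi) leaves $\delta$ unchanged) produces the same exponent on the left tail, so the two tails share the threshold. This gives $\E|X|^{\theta}<\infty$ for $\theta\in[0,1+1/\delta)$ and $\E|X|^{1+1/\delta}=\infty$.

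The main obstacle is the sharpness at the endpoint $\theta=1+1/\delta$: a mere bound $f(x)=O(x^{-(2+1/\delta)})$ would only yield convergence strictly below the threshold, not divergence at it. I therefore need the genuine asymptotic equivalence with $C_\infty>0$, which is precisely what the careful integration above supplies, since the subleading part of the integrand is absolutely integrable and the limiting constant $e^{c}$ is finite and nonzero (as $f>0$ on the support). As an independent check on the value of the threshold, the differential identity $[q(x)f(x)]'=(\mu-x)f(x)$ gives, whenever the relevant moments are finite and the boundary terms $[x^{k}q(x)f(x)]_{\alpha}^{\omega}$ vanish, the recursion $(1-k\delta)\,\E X^{k+1}=(\mu+k\beta)\,\E X^{k}+k\gamma\,\E X^{k-1}$; the vanishing of the coefficient $1-k\delta$ at $k=1/\delta$ matches exactly the breakdown of the moment of order $1+1/\delta$ found by the tail analysis.
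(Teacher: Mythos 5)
Your proposal is correct, but be aware that this paper contains no proof of the lemma at all: it is stated in the Appendix as a quoted result, \cite{AP2}, Corollary 2.2, where it rests on the explicit description of the integrated Pearson family by types (the Table 2.1 of \cite{AP2} invoked in Remark \ref{rem.complete}: bounded support or exponential tails for $\delta\leq 0$, power tails for the Pareto-, reciprocal-Gamma- and $F/t$-type densities when $\delta>0$), so the moments are read off case by case from closed-form densities. Your route is genuinely different and more economical: you bypass the classification by integrating the Pearson equation $f'/f=\big(\mu-x-q'(x)\big)/q(x)$ asymptotically along an unbounded tail, which must exist when $\delta>0$ since an upward parabola cannot be positive strictly between two of its roots (Proposition \ref{prop.a1}(iv)--(v)), and you correctly identify and handle the one delicate point, namely sharpness at the endpoint: a mere bound $f(x)=O\big(x^{-(2+1/\delta)}\big)$ would not give $\E|X|^{1+1/\delta}=\infty$, whereas your genuine equivalence $f(x)\sim C_{\infty}x^{-(2+1/\delta)}$ with $0<C_{\infty}<\infty$ does, and it is justified because after splitting off $-(1+2\delta)/(\delta x)$ the remainder has a linear numerator over a cubic denominator, hence is $O(1/x^{2})$ and absolutely integrable at infinity; the reflection via Proposition \ref{prop.a1}(vi) disposes of the left tail, and the $\delta\leq 0$ cases (bounded support, or affine $q$ with $f'/f$ tending to a negative constant at the unbounded end) are routine. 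What the classification approach of \cite{AP2} buys is the explicit densities themselves, which that paper needs elsewhere; what your asymptotic argument buys is a short, case-free proof of exactly the moment threshold. Your closing consistency check via the recursion $(1-k\delta)\,\E X^{k+1}=(\mu+k\beta)\,\E X^{k}+k\gamma\,\E X^{k-1}$ is a nice touch and dovetails with the present paper, where the same factors $1-j\delta$ appear in $c_k(\delta)$ of (\ref{lead}) and in the standing moment hypotheses $\delta<1/(2n-1)$; to make that check fully rigorous one would cite Lemma \ref{lem.limits2} for the vanishing of the boundary terms $x^{k}q(x)f(x)$, but since you offer it only as corroboration, nothing in your main argument depends on it.
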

%
\begin{lem}[(\cite{AP2}, Lemma~2.1)]\label{lem.limits2}
If $X\sim\IP\equiv\IPq$ has support $(\alpha,\omega)$
and $\E|X|^n<\infty$ for some $n\geq1$ (equivalently, $\delta<1/(n-1)$),
then for any polynomial $Q_{n-1}$ of degree at most $n-1$,
\setcounter{equation}{0}
\begin{equation}\label{limits2}
\lim_{x\nearrow\omega}{q(x)f(x)Q_{n-1}(x)} =
\lim_{x\searrow\alpha}{q(x)f(x)Q_{n-1}(x)}=0.
\end{equation}
\end{lem}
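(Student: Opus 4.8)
The plan is to start from the defining relation (\ref{qua1}). Since $f$ vanishes outside $(\alpha,\omega)$, for every $x\in(\alpha,\omega)$ this relation reads $q(x)f(x)=\int_{\alpha}^{x}(\mu-t)f(t)\ud{t}$. Because $\E X=\mu$ and $f$ integrates to one, the total integral $\int_{\alpha}^{\omega}(\mu-t)f(t)\ud{t}$ vanishes, so I can equivalently write $q(x)f(x)=\int_{x}^{\omega}(t-\mu)f(t)\ud{t}$. The first representation is the natural one for the limit at $\alpha$ and the second for the limit at $\omega$; each exhibits $q(x)f(x)$ as the tail of the absolutely convergent integral $\int_{\alpha}^{\omega}|t-\mu|f(t)\ud{t}=\E|X-\mu|<\infty$ (finite because $\E|X|<\infty$). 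Consequently $q(x)f(x)\to0$ at both endpoints already; the real content of the lemma is that this decay is fast enough to absorb the polynomial factor $Q_{n-1}$.

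Next I would reduce matters to weighted tail estimates. Writing $|Q_{n-1}(x)|\leq C(1+|x|^{n-1})$ for a suitable constant $C$, it suffices to show
\be
\lim_{x\nearrow\omega}(1+|x|^{n-1})\int_{x}^{\omega}|t-\mu|f(t)\ud{t}=0,
\ee
together with the analogous statement at $\alpha$ obtained from the first representation. When the endpoint is finite, say $\omega<\infty$, the factor $1+|x|^{n-1}$ stays bounded near $\omega$ while the tail integral vanishes, so this case is immediate; the same remark disposes of a finite $\alpha$.

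The hard part will be the unbounded endpoints, where the polynomial growth must be traded against the moment decay. For $\omega=+\infty$ and $x\geq1$, on the range $t\geq x$ one has $x^{n-1}\leq t^{n-1}$, hence
\be
x^{n-1}\int_{x}^{\infty}|t-\mu|f(t)\ud{t}\leq\int_{x}^{\infty}t^{n-1}(t+|\mu|)f(t)\ud{t}\leq(1+|\mu|)\int_{x}^{\infty}|t|^{n}f(t)\ud{t}.
\ee
By Lemma \ref{lem.moments}, the hypothesis $\delta<1/(n-1)$ is exactly $\E|X|^{n}<\infty$, so the right-hand side is the tail of a convergent integral and tends to $0$, while the constant term in $1+|x|^{n-1}$ is killed by the already-noted decay of $q(x)f(x)$. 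The endpoint $\alpha=-\infty$ is symmetric: for $x\leq-1$ and $t\leq x$ one uses $(-x)^{n-1}\leq(-t)^{n-1}=|t|^{n-1}$ to bound $(-x)^{n-1}\int_{-\infty}^{x}|\mu-t|f(t)\ud{t}$ by $(1+|\mu|)\int_{-\infty}^{x}|t|^{n}f(t)\ud{t}\to0$. Assembling the finite and infinite cases for both endpoints yields (\ref{limits2}). The only genuine delicacy is this balance between the degree-$(n-1)$ growth of $Q_{n-1}$ and the $n$-th moment bound, which is exactly why the sharp hypothesis $\E|X|^{n}<\infty$ is the right one.
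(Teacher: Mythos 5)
Your proof is correct and follows essentially the same route as the proof of this lemma in the cited source \cite{AP2} (the present paper only quotes the result in its appendix): the two tail representations $q(x)f(x)=\int_{\alpha}^{x}(\mu-t)f(t)\ud{t}=\int_{x}^{\omega}(t-\mu)f(t)\ud{t}$, obtained from (\ref{qua1}) and $\E X=\mu$, combined with absorbing the degree-$(n-1)$ polynomial growth into the integrand so that everything reduces to the vanishing tails of the convergent integral $\E|X|^{n}<\infty$, with the finite-endpoint cases handled trivially by boundedness of $Q_{n-1}$. I see no gaps; the monotonicity bounds $x^{n-1}\leq t^{n-1}$ on the relevant ranges and the use of $t\geq 1$ (resp.\ $|t|\geq 1$) to get the constant $1+|\mu|$ are all valid.
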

%
%
\setcounter{theo}{0}
\begin{theo}[(\cite{Hild}, page 401; \cite{Beale2}, pages 99--100;
\cite{DZ}, page 295; \cite{AP2}, Theorem~4.1)]\label{theo.polynomials}
Assume that $f$ is the density of a random
variable $X\sim\IPq\equiv\IP$ with support
$(\alpha,\omega)$. Then, the functions
$P_k\dvtx (\alpha,\omega)\to\R$ with
\begin{equation}\label{Rodrigues}
P_k(x):=\frac{(-1)^k}{f(x)}\frac{\ud^k}{\ud{x}^k}
\bigl[q^{k}(x)f(x)\bigr], \qquad \alpha<x<\omega, k=0,1,2,\ldots
\end{equation}
are (Rodrigues-type)
polynomials with
\begin{equation}\label{lead}
\deg(P_k)\leq k \quad \mbox{and}\quad  \lead(P_k)=\prod
_{j=k-1}^{2k-2}(1-j\delta):=c_k(
\delta),\qquad  k=0,1,2,\ldots,
\end{equation}
where $\lead(P_k)$ is the coefficient of $x^k$ in
$P_k(x)$. Here $c_0(\delta):=1$, that is, an empty product
should be
treated as one.
\end{theo}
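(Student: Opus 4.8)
The plan is to reduce everything to the first-order Pearson differential equation satisfied by $f$ and then run a single induction that carries along a hidden power of $q$. First I would differentiate the defining relation (\ref{qua1}): differentiating $\int_{-\infty}^{x}(\mu-t)f(t)\ud{t}=q(x)f(x)$ gives $(\mu-x)f=q'f+qf'$, so on $(\alpha,\omega)$ (where $f>0$ and $q>0$ by Proposition \ref{prop.a1}) one has the Pearson ODE
\[
q(x)f'(x)=p_1(x)f(x),\qquad p_1(x):=\mu-x-q'(x),
\]
with $\deg p_1\le 1$ and $\lead(p_1)=-(1+2\delta)$ (since $q'(x)=2\delta x+\beta$). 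This is exactly Proposition \ref{prop.a1}(iii), and apart from smoothness and positivity of $f$ on the support it is the only input I use; in particular no moment assumption is needed, since the whole argument is a local algebraic identity on $(\alpha,\omega)$.

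The core claim I would establish by induction on $j$, for fixed $k$, is the factored identity
\[
D^{j}[q^{k}f]=q^{k-j}\,r_{j,k}\,f,\qquad 0\le j\le k,
\]
where $r_{j,k}$ is a polynomial with $\deg r_{j,k}\le j$. The base case $j=0$ holds with $r_{0,k}=1$. For the inductive step, differentiating the hypothesis and substituting $f'=(p_1/q)f$ absorbs the resulting $q^{-1}$ into a lowered power of $q$, producing the factor $q^{k-j-1}$ together with the recurrence
\[
r_{j+1,k}=(k-j)\,q'\,r_{j,k}+q\,r_{j,k}'+p_1\,r_{j,k}.
\]
Each of the three terms has degree at most $j+1$, so $\deg r_{j+1,k}\le j+1$. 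Setting $j=k$ then yields $D^{k}[q^{k}f]=r_{k,k}f$, whence $P_k=(-1)^{k}r_{k,k}$ is a polynomial of degree at most $k$, which is the degree assertion in (\ref{lead}).

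For the leading coefficient I would track the top coefficient $a_j:=[x^{j}]\,r_{j,k}$ through the same recurrence. Reading off the degree-$(j+1)$ contribution of each term (using $\lead(q')=2\delta$, $\lead(q)=\delta$, and $\lead(p_1)=-(1+2\delta)$, together with $\lead(r_{j,k}')=j\,a_j$) gives
\[
a_{j+1}=\bigl(2\delta(k-j)+\delta j-(1+2\delta)\bigr)a_j=-\bigl(1-(2k-j-2)\delta\bigr)a_j,\qquad a_0=1.
\]
Iterating from $j=0$ to $k-1$ and reindexing $m=2k-j-2$ (which runs over $k-1,\ldots,2k-2$ as $j$ runs over $0,\ldots,k-1$) gives $a_k=(-1)^{k}\prod_{m=k-1}^{2k-2}(1-m\delta)$, so that $\lead(P_k)=(-1)^{k}a_k=\prod_{j=k-1}^{2k-2}(1-j\delta)=c_k(\delta)$. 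Note this also handles the degenerate case correctly: $\deg P_k<k$ occurs precisely when some factor $1-m\delta$ vanishes.

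The main obstacle, and the reason a naive degree count fails, is that the quotient $D^{j}[q^{k}f]/f$ is not visibly a polynomial for intermediate $j$: one must build the explicit factor $q^{k-j}$ into the inductive hypothesis so that the substitution $f'=(p_1/q)f$ cancels cleanly at every step rather than creating a genuine denominator. Once that factored form is in place, the remainder is the routine bookkeeping of leading coefficients described above.
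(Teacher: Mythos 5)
Your proof is correct, and it is essentially the standard Rodrigues-type argument that the paper defers to its references (the statement is cited from \cite{AP2}, Theorem 4.1, and earlier sources, with no proof given in this paper): an induction on $j$ establishing $D^{j}[q^{k}f]=q^{k-j}r_{j,k}f$ via the Pearson relation $qf'=p_1f$ of Proposition \ref{prop.a1}(iii), followed by tracking the leading coefficient through the recurrence. Your bookkeeping checks out, including the key observation that the factor $q^{k-j}$ must be built into the inductive hypothesis so the substitution $f'=(p_1/q)f$ cancels cleanly, and the telescoped product $a_k=(-1)^k\prod_{m=k-1}^{2k-2}(1-m\delta)$ correctly yields $\lead(P_k)=c_k(\delta)$ with no moment assumptions needed.
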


\begin{theo}[(\cite{APP2}, pages 515--516; \cite{AP2},
Theorem~5.1)]\label{theo.cov}
Let $X\sim\IP\equiv\IPq$ with density $f$
and support $(\alpha,\omega)$. Assume that $X$ has $2k$
finite moments for some fixed $k\in\{1,2,\ldots\}$.
Let $g\dvtx (\alpha,\omega)\to\R$ be any function such that
$g\in C^{k-1}(\alpha,\omega)$, and assume that the function
\[
g^{(k-1)}(x):=\frac{\ud^{k-1}}{\ud{x}^{k-1}}g(x)
\]
is absolutely
continuous in $(\alpha,\omega)$ with a.s. derivative $g^{(k)}$.
If $\E q^k(X)|g^{(k)}(X)|<\infty$ then
$\E|P_k(X)g(X)|<\infty$, where $P_k$ is
the polynomial
defined by (\textup{\ref{Rodrigues}}) of
Theorem~\textup{\ref{theo.polynomials}},
and the following covariance identity holds:
\begin{equation}\label{cov.id}
\E P_k(X) g(X) = \E q^{k}(X)g^{(k)}(X).
\end{equation}
\end{theo}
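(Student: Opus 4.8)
The plan is to establish the covariance identity (\ref{cov.id}) by integrating by parts $k$ times, using the Rodrigues representation $P_k(x)f(x)=(-1)^k\frac{\ud^k}{\ud{x}^k}[q^k(x)f(x)]$ supplied by Theorem \ref{theo.polynomials}. Writing
\[
\E P_k(X)g(X)=(-1)^k\int_{\alpha}^{\omega} g(x)\,[q^k(x)f(x)]^{(k)}\ud{x},
\]
each integration by parts transfers one derivative off the Rodrigues kernel onto $g$; after $k$ steps the interior integral becomes $\int_{\alpha}^{\omega} g^{(k)}(x)q^k(x)f(x)\ud{x}=\E q^k(X)g^{(k)}(X)$, and the whole argument reduces to showing that the $k$ boundary contributions vanish.

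First I would record a structural identity for the derivatives of the kernel. Differentiating and repeatedly inserting the Pearson relation $q(x)f'(x)=(\mu-x-q'(x))f(x)$ of Proposition \ref{prop.a1}(iii), a short induction on $j$ yields, for $0\leq j\leq k$,
\[
[q^k(x)f(x)]^{(j)}=q^{k-j}(x)\,R_j(x)\,f(x),\qquad \deg(R_j)\leq j,
\]
since at the inductive step the three terms obtained by differentiating $q^{k-j}R_jf$ each retain a factor $q^{k-j-1}f$ (the last one after replacing $qf'$ by $(\mu-x-q')f$). In particular $P_k=(-1)^kR_k$ is a genuine polynomial of degree at most $k$, in agreement with (\ref{lead}); moreover the $i$-th boundary term produced above has the form $g^{(i)}(x)\,q^{i+1}(x)\,R_{k-1-i}(x)\,f(x)$, so it always carries at least one factor $q(x)f(x)$.

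The main obstacle is to prove that each of these boundary terms tends to $0$ at $\alpha$ and at $\omega$. At a finite endpoint Proposition \ref{prop.a1}(v) gives $q=0$ there, so the factor $q^{i+1}$ absorbs the term provided $g^{(i)}f$ stays bounded; at an infinite endpoint the decay must come from $f$ together with the $2k$ finite moments. When $g$ is a polynomial of degree at most $k$ this is immediate, because then $g^{(i)}q^iR_{k-1-i}$ is a polynomial of degree at most $2k-1$ and Lemma \ref{lem.limits2} (applicable since $\E|X|^{2k}<\infty$) forces $q(x)f(x)$ times any such polynomial to vanish at both ends. For a general $g$ the factor $g^{(i)}$ is no longer polynomial, and this is the one genuinely delicate point. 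I would handle it by a truncation argument: integrate by parts on a compact $[a,b]\subset(\alpha,\omega)$, let $a\searrow\alpha$ and $b\nearrow\omega$, and exploit the absolute convergence of $\int_{\alpha}^{\omega} q^k|g^{(k)}|f$ to force the surviving boundary values, along suitable sequences, to be $0$. An attractive alternative is to bootstrap from the case $k=1$, where $(qf)'=(\mu-x)f$ collapses the identity to $\Cov(X,g(X))=\E q(X)g'(X)$ with the single boundary term $[\,g\,qf\,]_{\alpha}^{\omega}=0$, and then to iterate this one-step identity along the companion densities $f_j=q^jf/\E q^j(X)$, precisely the device already used in Corollary \ref{cor.Eq^n g^(n)} and Lemma \ref{lem.giwrgis}.

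Finally, the integrability statement $\E|P_k(X)g(X)|<\infty$ can be obtained from the same decomposition: expanding $g$ by Taylor's formula about $\mu$ with integral remainder splits $P_kg$ into a polynomial of degree at most $2k-1$, integrable because $\E|X|^{2k}<\infty$ by Lemma \ref{lem.moments}, plus a remainder dominated by means of $\E q^k(X)|g^{(k)}(X)|<\infty$. Once integrability is secured and the boundary terms are shown to vanish, collecting the surviving integral gives exactly $\E P_k(X)g(X)=\E q^k(X)g^{(k)}(X)$. The essential use of both the finite-moment hypothesis and of the integrability of $q^kg^{(k)}$ is concentrated in this boundary analysis, which I expect to be the crux of the proof.
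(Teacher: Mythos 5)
You should first note that this paper contains no proof of Theorem \ref{theo.cov} at all: it is an appendix item quoted from \cite{APP2} and \cite{AP2}, so there is no in-paper argument to compare against line by line. The closest relative inside the paper is the proof of Lemma \ref{lem.giwrgis}, and it is instructive that that proof deliberately avoids your strategy: instead of integrating by parts and confronting boundary terms, it converts $q(x)f(x)\phi_{k+1}'(x)$ into the tail integrals (\ref{der.poly}), interchanges the order of integration by Fubini, and justifies absolute integrability by splitting the range at the largest root of the polynomial, where the polynomial has constant sign. The cited sources establish (\ref{cov.id}) by essentially the same mechanism (Fubini-type interchange, Taylor's formula with integral remainder, and orthogonality of $P_k$ to polynomials of degree less than $k$), so that boundary terms never arise. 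Your skeleton --- the Rodrigues form, $k$-fold integration by parts, the structural identity $[q^k f]^{(j)}=q^{k-j}R_j f$ with $\deg R_j\le j$, and the remark that every boundary term carries a factor $q f$ --- is correct, and you correctly isolate the crux; but the crux itself is left open, and that is where the actual mathematical content of the theorem lies.

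Concretely, the truncation/subsequence argument you propose does not close the gap as stated. First, there is a circularity: to know that $\int_a^b g\,[q^k f]^{(k)}$ converges to $\E P_k(X)g(X)$ as $a\searrow\alpha$, $b\nearrow\omega$, you already need $\E|P_k(X)g(X)|<\infty$, which is part of the conclusion; so your Taylor-remainder integrability argument must come first, and it in turn requires $\E q^{i}(X)|Q(X)||g^{(i)}(X)|<\infty$ for the intermediate derivatives $i<k$ and polynomial weights $Q$ --- an $L^1$ analogue of Corollary \ref{cor.Eq^n g^(n)} that is itself a lemma needing proof, since the hypothesis controls only $i=k$. Second, even granting convergence of both sides, you only learn that the \emph{sum} of the boundary terms converges; that does not make its limit zero. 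The standard way to extract a common sequence along which every boundary term vanishes is to show that each boundary function $q^{i+1}|R_{k-1-i}||g^{(i)}|f$ is integrable on $(\alpha,\omega)$ (again the missing $L^1$ lemma), and even then integrability forces a vanishing liminf only at an \emph{infinite} endpoint: at a finite endpoint (the Beta case) an integrable function need not have liminf zero, and one must additionally exploit $q(\omega)=0$, e.g.\ by weighting with $1/q$ before applying Lemma \ref{lem.limits2}-type decay. Your alternative route --- bootstrapping the $k=1$ identity along the companion densities $f_j=q^j f/\E q^j(X)$ --- is the more promising one and is faithful to how Lemma \ref{lem.giwrgis} actually proceeds, because the $k=1$ identity admits a pure Fubini proof with no boundary terms (in the spirit of Lemma \ref{lem.deriv.l2}); but there too you would need the intermediate integrability at each step, and a final identification of the polynomial produced by the iteration with $P_k$. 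In short: right skeleton, crux correctly located, crux not resolved.
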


It should be noted that when we claim that
$h\dvtx (\alpha,\omega)\to\R$ is an absolutely continuous
function with a.s. derivative $h'$ we mean that there exists
a Borel measurable function $h'\dvtx (\alpha,\omega)\to\R$ such that
$h'$ is integrable in every finite subinterval
$[x,y]$ of $(\alpha,\omega)$, and
\[
\int_{x}^y h'(t)\,\ud{t}=h(y)-h(x)\qquad
\mbox{for all compact intervals } [x,y]\subseteq(\alpha,\omega).
\]

%
\begin{cor}[(\cite{APP2}, equation (3.5),
page 516; \cite{AP2}, Corollary~5.1)]
\label{cor.orth}
Let $X\sim\IP\equiv\IPq$.
Assume that for some
$n\in\{1,2,\ldots\}$,
$\E|X|^{2n}<\infty$ or, equivalently,
$\delta<1/(2n-1)$.
Then, the polynomials
defined by (\textup{\ref{Rodrigues}}) of
Theorem~\textup{\ref{theo.polynomials}} satisfy the orthogonality
condition
\begin{eqnarray}\label{orth}
\E\bigl[P_k(X) P_m(X)\bigr] &=&
\delta_{k,m} k! \E q^k(X) \prod
_{j=k-1}^{2k-2} (1-j\delta)\nonumber\\[-8pt]\\[-8pt]
& =&\delta_{k,m} k!
c_k(\delta)\E q^k(X),\qquad  k,m\in\{0,1,\ldots,n\},\nonumber
\end{eqnarray}
where $\delta_{k,m}$ is Kronecker's delta and where an empty product
should be treated as one.
\end{cor}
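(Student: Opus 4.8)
The plan is to obtain the orthogonality relation as a one-line consequence of the covariance identity in Theorem \ref{theo.cov}, feeding in the degree and leading-coefficient data supplied by Theorem \ref{theo.polynomials}. Since the left-hand side $\E[P_k(X)P_m(X)]$ is symmetric in $k$ and $m$, I would assume without loss of generality that $m\leq k$, and then apply the covariance identity \emph{with the larger index} $k$ to the function $g=P_m$. The point is that differentiating the lower-degree factor $P_m$ exactly $k$ times either annihilates it or reduces it to a constant, which is precisely what produces orthogonality.

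First I would verify that the hypotheses of Theorem \ref{theo.cov} hold for $g=P_m$ and the index $k$. As $P_m$ is a polynomial (Theorem \ref{theo.polynomials}), it belongs to $C^{\infty}(\alpha,\omega)$, so the smoothness and absolute-continuity requirements are automatic, with a.s.\ derivative $P_m^{(k)}$. The only condition needing attention is the integrability $\E q^k(X)|P_m^{(k)}(X)|<\infty$. Because $\deg(P_m)\leq m\leq k$, the derivative $P_m^{(k)}$ is identically zero when $m<k$ and equals the constant $k!\,\lead(P_k)$ when $m=k$; in either case it is bounded, and $\E q^k(X)<\infty$ because $\deg(q^k)\leq 2k\leq 2n$ together with the moment hypothesis $\E|X|^{2n}<\infty$ (equivalently $\delta<1/(2n-1)$; cf.\ Lemma \ref{lem.moments}) guarantees the finiteness. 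Hence the covariance identity applies and gives
\[
\E P_k(X) P_m(X) = \E q^k(X)\, P_m^{(k)}(X),
\]
with the left-hand side automatically finite.

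It then remains only to read off the two cases. If $m<k$, then $P_m^{(k)}\equiv 0$, so $\E P_k(X)P_m(X)=0$, which is the off-diagonal orthogonality. If $m=k$, then Theorem \ref{theo.polynomials} gives $P_k^{(k)}=k!\,\lead(P_k)=k!\,c_k(\delta)$ with $c_k(\delta)=\prod_{j=k-1}^{2k-2}(1-j\delta)$, so that
\[
\E P_k^2(X) = k!\,c_k(\delta)\,\E q^k(X).
\]
Assembling the two cases yields precisely the claimed formula with the Kronecker factor $\delta_{k,m}$; the convention that an empty product equals one covers the boundary index $k=0$, where $P_0\equiv 1$ and both sides collapse to the total mass.

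I do not expect any real obstacle here: the whole argument is a direct invocation of the covariance identity, and the single step deserving care is confirming the integrability hypothesis $\E q^k(X)|P_m^{(k)}(X)|<\infty$, which the moment bound $\E|X|^{2n}<\infty$ secures uniformly for all indices $k\leq n$. Everything else is bookkeeping on the degree of $P_m$ and its top coefficient.
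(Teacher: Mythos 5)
Your proof is correct, and it is essentially the derivation the paper (via \cite{APP2} and \cite{AP2}, where this corollary is proved) intends: the orthogonality relation (\ref{orth}) is exactly the covariance identity (\ref{cov.id}) applied with the larger index $k$ to $g=P_m$, where $P_m^{(k)}\equiv 0$ for $m<k$ and $P_k^{(k)}=k!\,c_k(\delta)$ by (\ref{lead}), with the integrability hypothesis $\E q^k(X)|P_m^{(k)}(X)|<\infty$ secured by $\E|X|^{2n}<\infty$. Your handling of the only delicate points --- choosing the larger index so the derivative degenerates, checking the hypotheses of Theorem \ref{theo.cov}, and the trivial case $k=m=0$ where $P_0\equiv 1$ --- matches the standard argument, so there is nothing to add.
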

\setcounter{rem}{0}
\begin{rem}\label{rem.a1}
The orthogonality of $P_k$ and
$P_m$, $k\neq m$, $k,m\in\{0,1,\ldots,n\}$,
remains valid even if
$\delta\in[\frac{1}{2n-1},\frac{1}{2n-2})$; in this case, however,
$P_n\notin L^2(\R,X)$ since $\lead(P_n)>0$ and $\E|X|^{2n}=\infty$.
\end{rem}
%
\begin{rem}\label{rem.a2}
In view of Lemma~\ref{lem.moments},
the assumption $\E|X|^{2n}<\infty$ is
equivalent to the condition $\delta<\frac{1}{2n-1}$.
Therefore, for each $k\in\{1,\ldots,n\}$
and for all $j\in\{k-1,\ldots,2k-2\}$
we have $1-j\delta>0$ because
\[
\{k-1,\ldots,2k-2\}\subseteq\{0,1,\ldots,2n-2\}.
\]
Thus, $c_k(\delta)>0$. Since $\Pr[q(X)>0]=1$, $\deg(q)\leq2$ and
$\E|X|^{2n}<\infty$ we conclude that $0<\E q^k(X)<\infty$ for all
$k\in\{0,1,\ldots,n\}$. It follows that the set
$\{\phi_0,\phi_1,\ldots,\phi_n\}\subset L^2(\R,X)$,
where
\begin{eqnarray}\label{orthonormal}
\phi_k(x)&:=&\frac{P_k(x)}{ (k! c_k(\delta)
\E q^k(X) )^{1/2}}\nonumber\\[-8pt]\\[-8pt]
&\hphantom{:}=&
\frac{((-1)^k/f(x))(\ud^k/\ud x^k)[q^k(x)f(x)]}{ (k!
\E q^k(X)\prod_{j=k-1}^{2k-2}(1-j\delta) )^{1/2}},\qquad  k=0,1,\ldots,n,\nonumber
\end{eqnarray}
is an orthonormal basis of all polynomials with degree at most $n$.
By (\ref{lead}), the leading coefficient of $\phi_k$ is
\begin{equation}
\label{lead.orth} \lead(\phi_k) 
= \biggl(
\frac{\prod_{j=k-1}^{2k-2}(1-j\delta)}{k! \E q^k(X)} \biggr)^{1/2} = \biggl(\frac{c_k(\delta)}{k! \E q^k(X)}
\biggr)^{1/2}>0, \qquad k=0,1,\ldots,n.
\end{equation}
The orthonormal system $\{\phi_k\}_{k=0}^n$ is characterized by the
fact that
$\deg(\phi_k)=k$ and $\lead(\phi_k)>0$ for each $k$.
\end{rem}
%
\begin{rem}\label{rem.complete}
The identity (\ref{cov.id}) enables
a convenient calculation of the Fourier coefficients
of any (smooth enough) function
$g$ with $\Var g(X)<\infty$.
More precisely,
if $X\sim\IP\equiv\IPq$
and $\E|X|^{2n}<\infty$ for
some $n\geq1$
then the Fourier coefficients of $g$,
$\alpha_k=\E\phi_k(X)g(X)$,
are given by $\alpha_0=\E g(X)$ and
\begin{equation}\label{Fourier2}
\alpha_k 
=\frac{\E q^k(X)g^{(k)}(X)}{(k!c_k(\delta)\E q^k(X))^{1/2}},\qquad  k=1,2,
\ldots,n,
\end{equation}
provided that $g$ is smooth enough so that
$\E q^k(X) |g^{(k)}(X)|<\infty$ for $k\in\{1,2,\ldots,n\}$;
cf. \cite{APP2}, Theorem~5.1(a).
Here $c_k(\delta)$ is given by (\ref{lead})
and for any $k\in\{1,\ldots,n\}$
(see \cite{AP2}, Corollary~5.3)
\begin{equation}\label{Eq^k}
\E q^{k}(X) =\frac{\prod_{j=0}^{k-1}(1-2j\delta)}{\prod_{j=0}^{k-1}(1-(2j+1)\delta)} \prod
_{j=0}^{k-1}q \biggl(\frac{\mu+j\beta}{1-2j\delta}
\biggr).
\end{equation}
In the particular case where $X\sim\IP$ and
$\delta\leq0$ (i.e., if $X$
is of Normal, Gamma or Beta-type), it follows that
$\E|X|^n<\infty$ for all $n$. Moreover, there exists
an $\varepsilon>0$ such that $\E e^{tX}<\infty$ for
$|t|<\varepsilon$ (see types 1--3 of Table~2.1 in \cite{AP2}).
Hence, the polynomials $\{\phi_k\}_{k=0}^\infty$,
given by~(\ref{orthonormal}) (with $n=\infty$), form a complete
orthonormal system in $L^2(\R,X)$;
see, for example, \cite{BergCrist,APP2}. Therefore, the Fourier
coefficients are easily obtained for any smooth enough function
$g$ such that $\Var g(X)<\infty$ and
$\E q^k(X) |g^{(k)}(X)|<\infty$
for all $k\geq1$. Indeed, in this case we have
\begin{equation}\label{Fourier}
\alpha_k=\E\phi_k(X)g(X) =
\frac{\E q^k(X)g^{(k)}(X)}{(k!c_k(\delta)\E q^k(X))^{1/2}}, \qquad k=0,1,2,\ldots,
\end{equation}
where 
$\E q^k(X)$ is as in (\ref{Eq^k}).
Thus, by Parseval's identity,
the variance of $g$ equals to
(\cite{APP2}, Theorem~5.1(a))
\begin{equation}\label{variance}
\Var g(X)=\sum_{k=1}^{\infty}
\frac{\E^2 q^k(X)g^{(k)}(X)}{k!c_k(\delta)\E q^k(X)},
\end{equation}
with $\E q^k(X)$ given by (\ref{Eq^k}) and
$c_k(\delta)$ by (\ref{lead}).
\end{rem}
%
\begin{theo}[(\cite{AP2}, Theorem~5.2)]
\label{theo.star}
Let $X$ be a random variable with density
$f\sim\IPq\equiv\IP$,
supported in $(\alpha,\omega)$.
Furthermore, assume that $\E|X|^{2n+1}<\infty$
(i.e., $\delta<\frac{1}{2n}$)
for some $n\in\{0,1,\ldots\}$.
Define the random variable
$X_k$ with density $f_k$ given by
\begin{equation}
\label{star2} f_k(x):=\frac{q^k(x)f(x)}{\E q^k(X)}, \qquad \alpha<x<\omega, k=0,1,
\ldots,n.
\end{equation}
Then, $f_k\sim\IPqk$ with (the same) support
$(\alpha,\omega)$,
\begin{equation}\label{eq.qstar2}
\mu_k=\frac{\mu+k\beta}{1-2k\delta} \quad \mbox{and}\quad
q_k(x)=\frac{q(x)}{1-2k\delta},\qquad  \alpha<x<\omega, k=0,1,\ldots,n.
\end{equation}
\end{theo}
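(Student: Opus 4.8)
The plan is to verify directly that the candidate density $f_k=q^kf/\E q^k(X)$ satisfies the defining integral equation (\ref{qua1}) for the pair $(\mu_k;q_k)$, after first reading off $q_k$ and $\mu_k$ from the Pearson differential equation. Before anything else I would record that $f_k$ is a genuine density with the right support: by Proposition \ref{prop.a1} the version $f$ lies in $C^\infty(\alpha,\omega)$, is strictly positive on $(\alpha,\omega)$, vanishes outside, and satisfies $q(x)>0$ there; since $\deg(q^k)\le 2k\le 2n$ and $\E|X|^{2n+1}<\infty$ (equivalently $\delta<1/(2n)$, whence $1-2k\delta>0$ for $k\le n$), the normalizer obeys $0<\E q^k(X)<\infty$. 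Thus $f_k$ integrates to one, and because $1-2k\delta>0$ forces $q_k=q/(1-2k\delta)>0$ exactly where $q>0$, the support of $X_k$ is again $(\alpha,\omega)$.

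Next I would compute the logarithmic derivative on $(\alpha,\omega)$ using Proposition \ref{prop.a1}(iii), namely $f'/f=(\mu-x-q')/q$:
\[
\frac{f_k'(x)}{f_k(x)}=k\frac{q'(x)}{q(x)}+\frac{f'(x)}{f(x)}=\frac{(k-1)q'(x)+\mu-x}{q(x)}.
\]
The one genuinely algebraic point is to check that this equals $(\mu_k-x-q_k'(x))/q_k(x)$ with $q_k=q/(1-2k\delta)$; clearing denominators reduces the claim to $(1-2k\delta)(\mu_k-x)-q'=(k-1)q'+\mu-x$, and since $q'$ is linear the coefficient of $x$ matches automatically, while equating constant terms forces precisely $\mu_k=(\mu+k\beta)/(1-2k\delta)$. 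Hence $f_k$ solves the Pearson equation with quadratic $q_k(x)=q(x)/(1-2k\delta)=\delta_kx^2+\beta_kx+\gamma_k$ and linear numerator $\mu_k-x-q_k'$.

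Finally I would pass back to the integral form. Setting $G(x):=q_k(x)f_k(x)=q^{k+1}(x)f(x)/\big((1-2k\delta)\E q^k(X)\big)$ and differentiating, the Pearson relation just established gives $G'(x)=q_k'f_k+q_kf_k'=(\mu_k-x)f_k(x)$, so integration from $\alpha$ to $x$ yields $G(x)-G(\alpha^+)=\int_\alpha^x(\mu_k-t)f_k(t)\ud{t}$. Here is where the real work sits: I must show $G(\alpha^+)=0$. Since $G$ is a fixed multiple of $q(x)f(x)\cdot q^k(x)$ with $\deg(q^k)\le 2k\le 2n$, the hypothesis $\E|X|^{2n+1}<\infty$ is exactly what Lemma \ref{lem.limits2} requires for a polynomial factor of degree $\le 2n$, so the boundary limit vanishes; the identical argument at the other endpoint gives $G(\omega^-)=0$. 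The vanishing at $\alpha$ turns the displayed identity into (\ref{qua1}) for $f_k$, while letting $x\nearrow\omega$ and combining $G(\omega^-)=0$ with $\int f_k=1$ forces $\mu_k=\E X_k$, confirming that $\mu_k$ is honestly the mean rather than a formal parameter. Therefore $f_k\sim\IPqk$ with the stated $\mu_k$ and $q_k$ for every $k\in\{0,1,\ldots,n\}$.

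The main obstacle is precisely this boundary analysis; everything else is linear bookkeeping. It is also the step that consumes the moment hypothesis sharply, since at $k=n$ the polynomial factor $q^n$ has degree $2n$ and Lemma \ref{lem.limits2} then demands exactly $\E|X|^{2n+1}<\infty$. (One could instead invoke the uniqueness of an \IP\ law with prescribed mean noted in the Introduction, but the direct verification above is self-contained and pinpoints where the hypothesis is needed.)
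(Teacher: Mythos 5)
Your proof is correct, but note that there is nothing in this paper to compare it against: Theorem \ref{theo.star} is one of the results quoted for easy reference in Appendix \ref{app} from \cite{AP2} (Theorem 5.2), and no proof of it appears here. Judged on its own merits, your verification is sound and self-contained relative to the other quoted facts. The finiteness and positivity of $\E q^k(X)$ and of $1-2k\delta$ are handled correctly; the computation $f_k'/f_k=\big((k-1)q'(x)+\mu-x\big)/q(x)$ is right, and matching coefficients does force $\mu_k=(\mu+k\beta)/(1-2k\delta)$; and the passage back to the integral form (\ref{qua1}) via $G(x)=q_k(x)f_k(x)$, $G'(x)=(\mu_k-x)f_k(x)$, with the boundary terms killed by Lemma \ref{lem.limits2} (the polynomial factor $q^k$ has degree at most $2k\le 2n$, which is exactly what $\E|X|^{2n+1}<\infty$ licenses), is precisely the right way to spend the moment hypothesis. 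Two small polish items, neither affecting correctness: (i) to see that $\mu_k$ is honestly $\E X_k$ you can avoid the limiting argument on the improper integral by noting directly that $\E q^k(X)|X|<\infty$, since $\deg\!\big(x\,q^k(x)\big)\le 2k+1\le 2n+1$; (ii) Definition \ref{def.IP} requires (\ref{qua1}) to hold for all $x\in\R$, not just on $(\alpha,\omega)$, so one should add the one-line remark that both sides vanish identically for $x\le\alpha$ and for $x\ge\omega$ (the latter using $\mu_k=\E X_k$ and the fact that all mass of $X_k$ lies in $(\alpha,\omega)$).
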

%
\begin{theo}[(\cite{AP2}, Theorem~5.3;
cf. \cite{Beale1}, page 207)]\label{theo.derivatives}
If $X\sim\IP$
with
support $(\alpha,\omega)$ and
$\E|X|^{2n}<\infty$ for some $n\geq1$
(i.e., $\delta<\frac{1}{2n-1}$), then
for any $m\in\{1,2,\ldots,n\}$,
\begin{equation}\label{derivatives.higher}
P_{k+m}^{(m)}(x)=C^{(m)}_k(
\delta) P_{k,m}(x), \qquad \alpha<x<\omega, k=0,1,\ldots,n-m,
\end{equation}
where
\begin{equation}\label{derivatives.higher.polynomials}
C^{(m)}_k(\delta):=
\frac{(k+m)!}{k!}(1-2m\delta)^k \prod_{j=k+m-1}^{k+2m-2}(1-j
\delta).
\end{equation}
Here, $P_k$ are the polynomials given by (\textup{\ref{Rodrigues}})
associated with $f$, and $P_{k,m}$ are the corresponding
Rodrigues polynomials of (\ref{Rodrigues}),
associated with the density $f_m(x)=\frac{q^m(x)f(x)}{\E q^m(X)}$,
$\alpha<x<\omega$,
of the random variable $X_m\sim\operatorname{IP}(\mu_m;q_m)$
defined in Theorem~\textup{\ref{theo.star}}, that is,
%
\begin{eqnarray}\label{polyn.orthogonal.gen.star}
P_{k,m}(x)&:=&\frac{(-1)^k}{f_m(x)}
\frac{\ud^k}{\ud
x^k}\bigl[q_m^k(x)f_m(x)\bigr]\nonumber\\[-8pt]\\[-8pt]
&\hphantom{:}=&\frac{(-1)^k}{(1-2m\delta)^k q^m(x)f(x)} \frac{\ud^k}{\ud x^k}\bigl[q^{k+m}(x)f(x)\bigr],
\qquad
\alpha<x<\omega, k=0,1,\ldots,n-m.\nonumber
\end{eqnarray}
\end{theo}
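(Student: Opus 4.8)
The plan is to prove the identity by induction on $m$, reducing the entire statement to a single first-order differentiation identity (the case $m=1$) that is then applied repeatedly along the chain of twisted densities $f,f_1,f_2,\ldots$ furnished by Theorem~\ref{theo.star}. The point of organizing the argument this way is that each step of the induction involves only \emph{one} differentiation, so the combinatorics never grows beyond the Leibniz expansion of a product with a quadratic or linear factor.

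First I would settle the base case $m=1$ for an \emph{arbitrary} integrated Pearson family, since the induction will need it applied to the twisted families. Writing $F:=q^{k+1}f$ and using the Pearson relation $qf'=p_1 f$ with $p_1(x)=\mu-x-q'(x)$ from Proposition~\ref{prop.a1}(iii), a short computation gives the first-order relation $qF'=rF$, where $r:=kq'+\mu-x$ has degree at most one. Differentiating $qF'-rF\equiv 0$ exactly $k+1$ times by the Leibniz rule — the sums terminate because $\deg q\le 2$ and $\deg r\le 1$ — and inserting $q''=2\delta$, $r'=2k\delta-1$, yields the three-term identity
\[
q\,F^{(k+2)}-p_1\,F^{(k+1)}+(k+1)(1-k\delta)\,F^{(k)}=0 .
\]
Substituting this into $P_{k+1}'=\big((-1)^{k+1}F^{(k+1)}/f\big)'$, and again eliminating $f'$ through $qf'=p_1f$, collapses the expression to $\tfrac{(-1)^k(k+1)(1-k\delta)}{qf}\,F^{(k)}$, which by (\ref{polyn.orthogonal.gen.star}) is precisely $C_k^{(1)}(\delta)\,P_{k,1}$.

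For the inductive step I would assume the identity for $m$ (for every integrated Pearson family and every admissible index) and establish it for $m+1$. Writing $P_{k+m+1}^{(m+1)}=\big(P_{k+m+1}^{(m)}\big)'$ and applying the inductive hypothesis with $k$ replaced by $k+1$ gives $P_{k+m+1}^{(m)}=P_{(k+1)+m}^{(m)}=C_{k+1}^{(m)}(\delta)\,P_{k+1,m}$. Then I apply the base case to the family $f_m$, which by Theorem~\ref{theo.star} is integrated Pearson with leading coefficient $\delta_m=\delta/(1-2m\delta)$ and whose next twist $(f_m)_1$ equals $f_{m+1}$ (both being proportional to $q^{m+1}f$); this yields $P_{k+1,m}'=C_k^{(1)}(\delta_m)\,P_{k,m+1}$. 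Composing, $P_{k+m+1}^{(m+1)}=C_{k+1}^{(m)}(\delta)\,C_k^{(1)}(\delta_m)\,P_{k,m+1}$, and it remains to verify the constant identity $C_{k+1}^{(m)}(\delta)\,C_k^{(1)}(\delta_m)=C_k^{(m+1)}(\delta)$. This is a routine simplification: writing $1-2\delta_m=(1-2(m+1)\delta)/(1-2m\delta)$ and $1-k\delta_m=(1-(k+2m)\delta)/(1-2m\delta)$, the powers of $(1-2m\delta)$ cancel, the factorials combine as $\tfrac{(k+m+1)!}{(k+1)!}(k+1)=\tfrac{(k+m+1)!}{k!}$, and the surplus factor $1-(k+2m)\delta$ extends the product $\prod_{j=k+m}^{k+2m-1}$ to $\prod_{j=k+m}^{k+2m}$, matching (\ref{derivatives.higher.polynomials}) with $m+1$ in place of $m$.

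I expect the genuine difficulty to lie not in any single computation but in the bookkeeping that keeps the induction honest: at each stage one must confirm that the twisted family $f_m$ really possesses enough finite moments for $P_{k+1,m}$ and $P_{k,m+1}$ to be legitimate Rodrigues polynomials, and that the base-case identity holds \emph{pointwise} on the common support $(\alpha,\omega)$. Both are secured by the standing hypothesis $\E|X|^{2n}<\infty$ (that is, $\delta<1/(2n-1)$) together with Theorem~\ref{theo.star}, because the highest-degree Rodrigues polynomial that ever enters is $P_{k+m+1}$ with $k+m+1\le n$, while $f_m$ retains $2(n-m)$ finite moments — exactly the amount the relevant indices demand.
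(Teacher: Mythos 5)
You should first know that this paper contains no proof of Theorem \ref{theo.derivatives} to compare against: it is stated in the appendix as quoted review material from \cite{AP2} (Theorem 5.3). Judged on its own terms, your proposal is correct and essentially complete, and it follows exactly the route the paper's own scaffolding suggests, namely iterating a one-derivative identity along the twisting chain $f\to f_1\to f_2\to\cdots$ of Theorem \ref{theo.star}. I checked the three computational pivots. (i) Base case: with $F=q^{k+1}f$ and $qf'=p_1f$ one indeed gets $qF'=rF$ with $r=kq'+\mu-x$; applying Leibniz $k+1$ times, the coefficient of $F^{(k+1)}$ is $(k+1)q'-r=q'+x-\mu=-p_1$ and that of $F^{(k)}$ is $(k+1)k\delta-(k+1)(2k\delta-1)=(k+1)(1-k\delta)$, so
\[
q\,F^{(k+2)}-p_1\,F^{(k+1)}+(k+1)(1-k\delta)\,F^{(k)}=0,
\]
and eliminating $f'$ from $P_{k+1}'=(-1)^{k+1}\bigl(F^{(k+1)}/f\bigr)'$ gives $P_{k+1}'=(-1)^k(k+1)(1-k\delta)F^{(k)}/(qf)$, which by (\ref{polyn.orthogonal.gen.star}) equals $C_k^{(1)}(\delta)P_{k,1}$. (ii) The chain is consistent: $(q_m)_1=q_m/(1-2\delta_m)=q/(1-2(m+1)\delta)=q_{m+1}$ and $(f_m)_1\propto q\cdot q^m f$, so $(f_m)_1=f_{m+1}$; since the quadratic is uniquely determined by the density via (\ref{qua1}), the Rodrigues polynomials of $(f_m)_1$ are precisely the $P_{k,m+1}$ of the statement. (iii) The constants compose as claimed: $1-2\delta_m=(1-2(m+1)\delta)/(1-2m\delta)$ and $1-k\delta_m=(1-(k+2m)\delta)/(1-2m\delta)$ yield $C_{k+1}^{(m)}(\delta)\,C_k^{(1)}(\delta_m)=\frac{(k+m+1)!}{k!}(1-2(m+1)\delta)^k\prod_{j=k+m}^{k+2m}(1-j\delta)=C_k^{(m+1)}(\delta)$.

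One refinement to your bookkeeping paragraph, in your favor: the base-case identity is purely pointwise, using only the Pearson relation of Proposition \ref{prop.a1}(iii) and $f\in C^{\infty}(\alpha,\omega)$, so no moments are consumed there; the moment hypothesis is needed only so that each $f_m$ is a legitimate integrated Pearson density with the parameters of Theorem \ref{theo.star}. Your induction invokes the base case only at levels $m\le n-1$, where $\delta<\frac{1}{2n-1}$ gives $2m\delta\le\frac{2n-2}{2n-1}<1$ (so $1-2m\delta>0$ and $\delta_m$, $q_m$ are well defined) and $\E|X|^{2m+1}<\infty$ since $2m+1\le 2n-1$; the boundary instance $m=n$, $k=0$ is then produced from level $n-1$ via $C_0^{(1)}(\delta_{n-1})=1$ and is in any case trivially consistent, as $P_{0,n}\equiv 1$ and $P_n^{(n)}=n!\,c_n(\delta)=C_0^{(n)}(\delta)$ by (\ref{lead}). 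In short: a correct, self-contained proof, compatible with (and very likely equivalent in substance to) the argument in the cited source.
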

%
\begin{theo}[(\cite{AP2}, Corollary~5.4)]\label{theor.derivatives}
Let $X\sim\IP\equiv\IPq$
and assume that
$\E|X|^{2n}<\infty$ for some fixed $n\geq1$
(i.e., $\delta<\frac{1}{2n-1}$).
Let $\{\phi_k\}_{k=0}^n$ be the orthonormal
polynomials associated with $X$, with $\lead(\phi_k)>0$;
see (\textup{\ref{orthonormal}}), (\textup{\ref{lead.orth}}).
Fix a number $m\in\{0,1,\ldots,n\}$,
and consider the corresponding orthonormal polynomials
$\{\phi_{k,m}\}_{k=0}^{n-m}$,
with $\lead(\phi_{k,m})>0$,
associated with
$X_m\sim f_m=q^m f/\E q^{m}(X)$.
Then,
\begin{equation}\label{orthonormal.derivatives}
\phi_{k+m}^{(m)}(x)=
\nu_k^{(m)} \phi_{k,m}(x), 
\qquad k=0,1,
\ldots,n-m,
\end{equation}
where the constants
$\nu_k^{(m)}=\nu_k^{(m)}(\mu;q)>0$
are given by
\begin{equation}\label{orthonormal.constants}
\nu_k^{(m)}=
\nu^{(m)}_k(\mu;q): = \biggl\{ \frac{((k+m)!/k!)\prod_{j=k+m-1}^{k+2m-2}(1-j\delta)}{\E q^m(X)} \biggr
\}^{1/2},
\end{equation}
with $\E q^m(X)$ as in (\textup{\ref{Eq^k}}) with $m$ in place of $k$.
In particular, setting $\sigma^2=\Var X=\E q(X)$ we have
\begin{eqnarray}\label{eq.paragwgos}
\phi_{k+1}'(x)&=&\frac{\sqrt{(k+1)(1-k\delta)}}{\sigma}
\phi_{k,1}(x) \nonumber\\[-8pt]\\[-8pt]
&=&\sqrt{\frac{(k+1)(1-\delta)(1-k\delta)}{q(\mu)}}\phi_{k,1}(x), \qquad k=0,1,
\ldots,n-1.\nonumber
\end{eqnarray}
%
\end{theo}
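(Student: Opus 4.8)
The plan is to pull everything back to the Rodrigues-type polynomials $P_k$, invoke the already-established derivative identity for those polynomials (Theorem~\ref{theo.derivatives}), and then re-normalize to recover the orthonormal polynomials, the only genuine work being the simplification of the resulting constant. By the moment hypothesis $\delta<1/(2n-1)$ we have, for every index occurring below, $c_k(\delta)>0$ and $0<\E q^k(X)<\infty$ (Remark~\ref{rem.a2}), so all the normalizing constants are well defined and strictly positive. First I would record, from (\ref{orthonormal}) and (\ref{lead}), that $\phi_{k+m}$ (associated with $X$) and $\phi_{k,m}$ (associated with $X_m$) are explicit positive multiples of $P_{k+m}$ and $P_{k,m}$:
\[
\phi_{k+m}(x)=\frac{P_{k+m}(x)}{\bigl((k+m)!\,c_{k+m}(\delta)\,\E q^{k+m}(X)\bigr)^{1/2}},
\qquad
\phi_{k,m}(x)=\frac{P_{k,m}(x)}{\bigl(k!\,c_k(\delta_m)\,\E q_m^{k}(X_m)\bigr)^{1/2}},
\]
where, by Theorem~\ref{theo.star} and (\ref{eq.qstar2}), $X_m$ is itself an integrated Pearson variable with quadratic $q_m=q/(1-2m\delta)$ and leading coefficient $\delta_m=\delta/(1-2m\delta)$, and $c_k(\cdot)$ is the constant of (\ref{lead}).

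Next I would differentiate the first display $m$ times and substitute the key identity $P_{k+m}^{(m)}=C_k^{(m)}(\delta)\,P_{k,m}$ of (\ref{derivatives.higher})--(\ref{derivatives.higher.polynomials}); replacing $P_{k,m}$ by its expression through $\phi_{k,m}$ then yields $\phi_{k+m}^{(m)}=\nu_k^{(m)}\phi_{k,m}$ with
\[
\nu_k^{(m)}=C_k^{(m)}(\delta)\left(\frac{k!\,c_k(\delta_m)\,\E q_m^{k}(X_m)}
{(k+m)!\,c_{k+m}(\delta)\,\E q^{k+m}(X)}\right)^{1/2}.
\]
Since both $\phi_{k+m}$ and $\phi_{k,m}$ have positive leading coefficients and $m$-fold differentiation multiplies $\lead(\phi_{k+m})$ by the positive factor $(k+m)!/k!$, the constant $\nu_k^{(m)}$ is automatically positive, which is consistent with the sign convention $\lead(\phi_{k,m})>0$.

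The last and only laborious step is to reduce this constant to the form (\ref{orthonormal.constants}). I would use three elementary facts: (a) $q_m=q/(1-2m\delta)$ gives $\E q_m^{k}(X_m)=\E q^{k+m}(X)/\bigl[(1-2m\delta)^{k}\E q^{m}(X)\bigr]$; (b) from $1-j\delta_m=(1-(2m+j)\delta)/(1-2m\delta)$ one gets $c_k(\delta_m)=(1-2m\delta)^{-k}\prod_{j=k+2m-1}^{2k+2m-2}(1-j\delta)$; and (c) the window $\prod_{j=k+m-1}^{2k+2m-2}$ defining $c_{k+m}(\delta)$ splits as $\prod_{j=k+m-1}^{k+2m-2}\cdot\prod_{j=k+2m-1}^{2k+2m-2}$. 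Inserting (a)--(c) together with the explicit $C_k^{(m)}(\delta)$, all powers of $(1-2m\delta)$ cancel, the factor $\E q^{k+m}(X)$ cancels, the products $\prod_{j=k+2m-1}^{2k+2m-2}(1-j\delta)$ cancel, and exactly one copy of $\prod_{j=k+m-1}^{k+2m-2}(1-j\delta)$ survives, leaving $\nu_k^{(m)}=\{\frac{(k+m)!}{k!}\prod_{j=k+m-1}^{k+2m-2}(1-j\delta)/\E q^m(X)\}^{1/2}$, as claimed. The main obstacle is purely this bookkeeping of index ranges; there is no conceptual difficulty, since the analytic content lives entirely in Theorem~\ref{theo.derivatives}. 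Finally, the particular case (\ref{eq.paragwgos}) follows by setting $m=1$, which gives $\nu_k^{(1)}=\sqrt{(k+1)(1-k\delta)/\E q(X)}$, and by using $\sigma^2=\E q(X)=q(\mu)/(1-\delta)$ from (\ref{Eq^k}) to obtain the second displayed form.
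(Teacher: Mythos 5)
Your proposal is correct. A caveat first: this paper never proves Theorem \ref{theor.derivatives} --- it is stated in the Appendix as a quoted result (\cite{AP2}, Corollary 5.4), so there is no in-paper proof to compare against; your argument is, however, exactly the derivation one would expect the cited source to use, namely Theorem \ref{theo.derivatives} plus renormalization. I checked the bookkeeping and it all goes through: your facts (a)--(c) are right (in (b), $1-j\delta_m=(1-(2m+j)\delta)/(1-2m\delta)$ shifts the window $\{k-1,\dots,2k-2\}$ to $\{k+2m-1,\dots,2k+2m-2\}$, which is precisely the tail factor of $c_{k+m}(\delta)$ in (c)); the powers of $(1-2m\delta)$ combine as $(1-2m\delta)^{k}\cdot\bigl((1-2m\delta)^{-2k}\bigr)^{1/2}=1$; the factorials give $\frac{(k+m)!}{k!}\bigl(\frac{k!}{(k+m)!}\bigr)^{1/2}=\bigl(\frac{(k+m)!}{k!}\bigr)^{1/2}$; and the leftover product is $\bigl(\prod_{j=k+m-1}^{k+2m-2}(1-j\delta)\bigr)^{1/2}$, recovering (\ref{orthonormal.constants}). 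The $m=1$ specialization and the identity $\E q(X)=q(\mu)/(1-\delta)$ from (\ref{Eq^k}) give (\ref{eq.paragwgos}) as you say. One point worth making explicit, since your steps (a)--(b) and the appeal to Theorem \ref{theo.star} tacitly use $1-2m\delta>0$: for $k\geq 1$ the constraint $k\leq n-m$ forces $m\leq n-1$, so $2m\delta\leq(2n-2)\delta<\frac{2n-2}{2n-1}<1$ and the application of Theorem \ref{theo.star} to $X_m$ (which needs $\E|X|^{2m+1}<\infty$, i.e.\ $\delta<\frac{1}{2m}$) is legitimate; in the boundary case $m=n$ only $k=0$ occurs, where all products are empty, $\phi_{0,n}\equiv 1$, and the identity reduces to $\phi_n^{(n)}=n!\,\lead(\phi_n)=\bigl(n!\,c_n(\delta)/\E q^n(X)\bigr)^{1/2}$, which is (\ref{orthonormal.constants}) directly --- so no hypothesis beyond $\E|X|^{2n}<\infty$ is actually consumed. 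With that remark added, your proof is complete.
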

\end{appendix}

\section*{Acknowledgement}
Work partially supported
 by the University of Athens Research Grant 70/4/5637.



\printhistory


\begin{thebibliography}{26}

\bibitem{AP}
\begin{barticle}[mr]
\bauthor{\bsnm{Afendras},~\bfnm{G.}\binits{G.}} \AND
  \bauthor{\bsnm{Papadatos},~\bfnm{N.}\binits{N.}}
(\byear{2011}).
\btitle{On matrix variance inequalities}.
\bjournal{J. Statist. Plann. Inference}
\bvolume{141}
\bpages{3628--3631}.
\bid{doi={10.1016/j.jspi.2011.05.016}, issn={0378-3758}, mr={2817368}}
\bptok{imsref}\end{barticle}
\endbibitem

\bibitem{AP2}
\begin{bmisc}[auto:STB|2013/01/23|16:20:06]
\bauthor{\bsnm{Afendras},~\bfnm{G.}\binits{G.}} \AND
  \bauthor{\bsnm{Papadatos},~\bfnm{N.}\binits{N.}}
\bhowpublished{(2012). Integrated Pearson family and orthogonality of the
  Rodrigues polynomials: A review including new results and an alternative
  classification of the Pearson system. Preprint. Available at \arxivurl{arXiv:1205.2903v2}}.
\bptok{imsref}\end{bmisc}
\endbibitem

\bibitem{APP2}
\begin{barticle}[mr]
\bauthor{\bsnm{Afendras},~\bfnm{G.}\binits{G.}},
  \bauthor{\bsnm{Papadatos},~\bfnm{N.}\binits{N.}} \AND
  \bauthor{\bsnm{Papathanasiou},~\bfnm{V.}\binits{V.}}
(\byear{2011}).
\btitle{An extended {S}tein-type covariance identity for the {P}earson family
  with applications to lower variance bounds}.
\bjournal{Bernoulli}
\bvolume{17}
\bpages{507--529}.
\bid{doi={10.3150/10-BEJ282}, issn={1350-7265}, mr={2787602}}
\bptok{imsref}\end{barticle}
\endbibitem

\bibitem{AB}
\begin{barticle}[mr]
\bauthor{\bsnm{Arnold},~\bfnm{Barry~C.}\binits{B.C.}} \AND
  \bauthor{\bsnm{Brockett},~\bfnm{Patrick~L.}\binits{P.L.}}
(\byear{1988}).
\btitle{Variance bounds using a theorem of {P}\'olya}.
\bjournal{Statist. Probab. Lett.}
\bvolume{6}
\bpages{321--326}.
\bid{doi={10.1016/0167-7152(88)90007-7}, issn={0167-7152}, mr={0933290}}
\bptok{imsref}\end{barticle}
\endbibitem

\bibitem{Beale1}
\begin{barticle}[auto:STB|2013/01/23|16:20:06]
\bauthor{\bsnm{Beale},~\bfnm{F.~S.}\binits{F.S.}}
(\byear{1937}).
\btitle{On the polynomials related to Pearson's differential equation}.
\bjournal{Ann. Math. Statist.}
\bvolume{8}
\bpages{206--223}.
\bptok{imsref}\end{barticle}
\endbibitem

\bibitem{Beale2}
\begin{barticle}[mr]
\bauthor{\bsnm{Beale},~\bfnm{Frank~S.}\binits{F.S.}}
(\byear{1941}).
\btitle{On a certain class of orthogonal polynomials}.
\bjournal{Ann. Math. Statist.}
\bvolume{12}
\bpages{97--103}.
\bid{issn={0003-4851}, mr={0003852}}
\bptok{imsref}\end{barticle}
\endbibitem

\bibitem{BergCrist}
\begin{barticle}[mr]
\bauthor{\bsnm{Berg},~\bfnm{Ch.}\binits{C.}} \AND
  \bauthor{\bsnm{Christensen},~\bfnm{J.~P.~R.}\binits{J.P.R.}}
(\byear{1981}).
\btitle{Density questions in the classical theory of moments}.
\bjournal{Ann. Inst. Fourier (Grenoble)}
\bvolume{31}
\bpages{99--114}.
\bid{issn={0373-0956}, mr={0638619}}
\bptok{imsref}\end{barticle}
\endbibitem

\bibitem{BorUtev}
\begin{barticle}[mr]
\bauthor{\bsnm{Borovkov},~\bfnm{A.~A.}\binits{A.A.}} \AND
  \bauthor{\bsnm{Utev},~\bfnm{S.~A.}\binits{S.A.}}
(\byear{1983}).
\btitle{An inequality and a characterization of the normal distribution
  connected with it}.
\bjournal{Teor. Veroyatnost. i Primenen.}
\bvolume{28}
\bpages{209--218}.
\bid{issn={0040-361X}, mr={0700206}}
\bptok{imsref}\end{barticle}
\endbibitem

\bibitem{BL}
\begin{barticle}[mr]
\bauthor{\bsnm{Brascamp},~\bfnm{Herm~Jan}\binits{H.J.}} \AND
  \bauthor{\bsnm{Lieb},~\bfnm{Elliott~H.}\binits{E.H.}}
(\byear{1976}).
\btitle{On extensions of the {B}runn--{M}inkowski and {P}r\'ekopa--{L}eindler
  theorems, including inequalities for log concave functions, and with an
  application to the diffusion equation}.
\bjournal{J. Funct. Anal.}
\bvolume{22}
\bpages{366--389}.
\bid{mr={0450480}}
\bptok{imsref}\end{barticle}
\endbibitem

\bibitem{Cac}
\begin{barticle}[mr]
\bauthor{\bsnm{Cacoullos},~\bfnm{Theophilos}\binits{T.}}
(\byear{1982}).
\btitle{On upper and lower bounds for the variance of a function of a random
  variable}.
\bjournal{Ann. Probab.}
\bvolume{10}
\bpages{799--809}.
\bid{issn={0091-1798}, mr={0659549}}
\bptok{imsref}\end{barticle}
\endbibitem

\bibitem{CP1}
\begin{barticle}[mr]
\bauthor{\bsnm{Cacoullos},~\bfnm{T.}\binits{T.}} \AND
  \bauthor{\bsnm{Papathanasiou},~\bfnm{V.}\binits{V.}}
(\byear{1985}).
\btitle{On upper bounds for the variance of functions of random variables}.
\bjournal{Statist. Probab. Lett.}
\bvolume{3}
\bpages{175--184}.
\bid{doi={10.1016/0167-7152(85)90014-8}, issn={0167-7152}, mr={0801687}}
\bptok{imsref}\end{barticle}
\endbibitem

\bibitem{CP2}
\begin{barticle}[mr]
\bauthor{\bsnm{Cacoullos},~\bfnm{T.}\binits{T.}} \AND
  \bauthor{\bsnm{Papathanasiou},~\bfnm{V.}\binits{V.}}
(\byear{1989}).
\btitle{Characterizations of distributions by variance bounds}.
\bjournal{Statist. Probab. Lett.}
\bvolume{7}
\bpages{351--356}.
\bid{doi={10.1016/0167-7152(89)90050-3}, issn={0167-7152}, mr={1001133}}
\bptok{imsref}\end{barticle}
\endbibitem

\bibitem{Chen}
\begin{barticle}[mr]
\bauthor{\bsnm{Chen},~\bfnm{Louis H.~Y.}\binits{L.H.Y.}}
(\byear{1982}).
\btitle{An inequality for the multivariate normal distribution}.
\bjournal{J. Multivariate Anal.}
\bvolume{12}
\bpages{306--315}.
\bid{doi={10.1016/0047-259X(82)90022-7}, issn={0047-259X}, mr={0661566}}
\bptok{imsref}\end{barticle}
\endbibitem

\bibitem{Cher}
\begin{barticle}[mr]
\bauthor{\bsnm{Chernoff},~\bfnm{Herman}\binits{H.}}
(\byear{1981}).
\btitle{A note on an inequality involving the normal distribution}.
\bjournal{Ann. Probab.}
\bvolume{9}
\bpages{533--535}.
\bid{issn={0091-1798}, mr={0614640}}
\bptok{imsref}\end{barticle}
\endbibitem

\bibitem{DZ}
\begin{barticle}[mr]
\bauthor{\bsnm{Diaconis},~\bfnm{Persi}\binits{P.}} \AND
  \bauthor{\bsnm{Zabell},~\bfnm{Sandy}\binits{S.}}
(\byear{1991}).
\btitle{Closed form summation for classical distributions: Variations on a
  theme of de {M}oivre}.
\bjournal{Statist. Sci.}
\bvolume{6}
\bpages{284--302}.
\bid{issn={0883-4237}, mr={1144242}}
\bptok{imsref}\end{barticle}
\endbibitem

\bibitem{Hild}
\begin{barticle}[auto:STB|2013/01/23|16:20:06]
\bauthor{\bsnm{Hildebrandt},~\bfnm{E.~H.}\binits{E.H.}}
(\byear{1931}).
\btitle{Systems of polynomials connected with the Charlier expansions and the
  Pearson differential and difference equations}.
\bjournal{Ann. Math. Statist.}
\bvolume{2}
\bpages{379--439}.
\bptok{imsref}\end{barticle}
\endbibitem

\bibitem{HK}
\begin{barticle}[mr]
\bauthor{\bsnm{Houdr{\'e}},~\bfnm{Christian}\binits{C.}} \AND
  \bauthor{\bsnm{Kagan},~\bfnm{Abram}\binits{A.}}
(\byear{1995}).
\btitle{Variance inequalities for functions of {G}aussian variables}.
\bjournal{J. Theoret. Probab.}
\bvolume{8}
\bpages{23--30}.
\bid{doi={10.1007/BF02213451}, issn={0894-9840}, mr={1308667}}
\bptok{imsref}\end{barticle}
\endbibitem

\bibitem{John}
\begin{barticle}[mr]
\bauthor{\bsnm{Johnson},~\bfnm{Roger~W.}\binits{R.W.}}
(\byear{1993}).
\btitle{A note on variance bounds for a function of a {P}earson variate}.
\bjournal{Statist. Decisions}
\bvolume{11}
\bpages{273--278}.
\bnote{Errata: \textbf{12} 217}.
\bid{issn={0721-2631}, mr={1257861}}
\bptnote{check related}\bptok{imsref}\end{barticle}
\endbibitem

\bibitem{Klaassen}
\begin{barticle}[mr]
\bauthor{\bsnm{Klaassen},~\bfnm{Chris A.~J.}\binits{C.A.J.}}
(\byear{1985}).
\btitle{On an inequality of {C}hernoff}.
\bjournal{Ann. Probab.}
\bvolume{13}
\bpages{966--974}.
\bid{issn={0091-1798}, mr={0799431}}
\bptok{imsref}\end{barticle}
\endbibitem

\bibitem{Nash}
\begin{barticle}[mr]
\bauthor{\bsnm{Nash},~\bfnm{J.}\binits{J.}}
(\byear{1958}).
\btitle{Continuity of solutions of parabolic and elliptic equations}.
\bjournal{Amer. J. Math.}
\bvolume{80}
\bpages{931--954}.
\bid{issn={0002-9327}, mr={0100158}}
\bptok{imsref}\end{barticle}
\endbibitem

\bibitem{OlkShepp}
\begin{barticle}[mr]
\bauthor{\bsnm{Olkin},~\bfnm{Ingram}\binits{I.}} \AND
  \bauthor{\bsnm{Shepp},~\bfnm{Larry}\binits{L.}}
(\byear{2005}).
\btitle{A matrix variance inequality}.
\bjournal{J. Statist. Plann. Inference}
\bvolume{130}
\bpages{351--358}.
\bid{doi={10.1016/j.jspi.2003.08.018}, issn={0378-3758}, mr={2128013}}
\bptok{imsref}\end{barticle}
\endbibitem

\bibitem{PP}
\begin{bincollection}[auto:STB|2013/01/23|16:20:06]
\bauthor{\bsnm{Papadatos},~\bfnm{N.}\binits{N.}} \AND
  \bauthor{\bsnm{Papathanasiou},~\bfnm{V.}\binits{V.}}
(\byear{2001}).
\btitle{Unified variance bounds and a Stein-type identity}.
In \bbooktitle{Probability and Statistical Models with Applications}
(\beditor{\bfnm{Ch.~A.}\binits{C.A.}~\bsnm{Charalambides}},
  \beditor{\bfnm{M.~V.}\binits{M.V.}~\bsnm{Koutras}} \AND
  \beditor{\bfnm{N.}\binits{N.}~\bsnm{Balakrishnan}}, eds.)
\bpages{87--100}.
\blocation{New York}: \bpublisher{Chapman \& Hall/CRC}.
\bptok{imsref}\end{bincollection}
\endbibitem

\bibitem{Pap1}
\begin{barticle}[mr]
\bauthor{\bsnm{Papathanasiou},~\bfnm{V.}\binits{V.}}
(\byear{1988}).
\btitle{Variance bounds by a generalization of the {C}auchy--{S}chwarz
  inequality}.
\bjournal{Statist. Probab. Lett.}
\bvolume{7}
\bpages{29--33}.
\bid{doi={10.1016/0167-7152(88)90084-3}, issn={0167-7152}, mr={0996849}}
\bptok{imsref}\end{barticle}
\endbibitem

\bibitem{P-Rao}
\begin{barticle}[mr]
\bauthor{\bsnm{Prakasa~Rao},~\bfnm{B.~L.~S.}\binits{B.L.S.}}
(\byear{2006}).
\btitle{Matrix variance inequalities for multivariate distributions}.
\bjournal{Stat. Methodol.}
\bvolume{3}
\bpages{416--430}.
\bid{doi={10.1016/j.stamet.2005.11.002}, issn={1572-3127}, mr={2252395}}
\bptok{imsref}\end{barticle}
\endbibitem

\bibitem{WZ}
\begin{barticle}[mr]
\bauthor{\bsnm{Wei},~\bfnm{Zhengyuan}\binits{Z.}} \AND
  \bauthor{\bsnm{Zhang},~\bfnm{Xinsheng}\binits{X.}}
(\byear{2009}).
\btitle{Covariance matrix inequalities for functions of beta random variables}.
\bjournal{Statist. Probab. Lett.}
\bvolume{79}
\bpages{873--879}.
\bid{doi={10.1016/j.spl.2008.11.012}, issn={0167-7152}, mr={2509476}}
\bptok{imsref}\end{barticle}
\endbibitem

\end{thebibliography}
\end{document}